\title{On Instruction Sets for Boolean Registers \\ in Program Algebra}
\author{J.A. Bergstra \and C.A. Middelburg}
\institute{Informatics Institute, Faculty of Science, University of
           Amsterdam, \\
           Science Park~904, 1098~XH Amsterdam, the Netherlands \\
           \email{J.A.Bergstra@uva.nl,C.A.Middelburg@uva.nl}}
\begin{document}
\maketitle

\begin{abstract}
In previous work carried out in the setting of program algebra, 
including work in the area of instruction sequence size complexity, we 
chose instruction sets for Boolean registers that contain only 
instructions of a few of the possible kinds.
In the current paper, we study instruction sequence size bounded 
functional completeness of all possible instruction sets for Boolean 
registers.
We expect that the results of this study will turn out to be useful to 
adequately assess results of work that is concerned with lower bounds of 
instruction sequence size complexity.
\begin{keywords} 
\sloppy
Boolean register, instruction set, size-bounded functional completeness,  
instruction sequence size, program algebra. 
\end{keywords}%
\begin{classcode}
F.1.1, F.1.3.
\end{classcode}\end{abstract}

\section{Introduction}
\label{sect-intro}

In~\cite{BM13a}, we presented an approach to computational complexity in 
which algorithmic problems are viewed as Boolean function families that 
consist of one $n$-ary Boolean function for each natural number $n$ and 
the complexity of such problems is assessed in terms of the length of 
finite single-pass instruction sequences acting on Boolean registers 
that compute the members of these families.
The instruction sequences concerned contain only instructions to set 
and get the content of Boolean registers, forward jump instructions, 
and a termination instruction.
Moreover, each Boolean register used serves as either input register, 
output register or auxiliary register.

Auxiliary Boolean registers are not needed to compute Boolean functions.
The question whether shorter instruction sequences are possible with the 
use of auxiliary Boolean registers was not answered in~\cite{BM13a}.
In~\cite{BM14e}, we showed that, in the case of the parity functions, 
shorter instruction sequences are possible with the use of an auxiliary 
Boolean register provided the instruction set is extended with
instructions to complement the content of auxiliary Boolean registers.
In the current paper, we consider all instructions for Boolean 
registers that are possible in the setting in which the work presented 
in~\cite{BM13a,BM14e} has been carried out and investigate instruction 
sequence size bounded functional completeness of instruction sets for 
Boolean registers.

Intuitively, a given instruction set for Boolean registers is 
$n$-size-bounded functionally complete if the effects of each possible 
instruction for Boolean registers can be obtained by an instruction 
sequence whose length is at most $n$ and that contains only instructions 
from the given instruction set for Boolean registers, forward jump 
instructions, and a termination instruction. 
A given instruction set for Boolean registers is functionally 
complete if it is $n$-size-bounded functionally complete for some $n$.

We have identified one of the $256$ smallest instruction sets for 
Boolean registers that is $1$-size-bounded functionally complete
(Corollary~\ref{corollary-bounded-complete-1}), and we have found that 
there is a large subset of this $1$-size-bounded functionally complete 
instruction set with the following properties:
(i)~each of its proper subsets that does not include the instructions to 
complement the content of Boolean registers, but includes the 
instructions to set and get the content of Boolean registers, is 
$4$-size-bounded functionally complete and not $3$-size-bounded 
functionally complete and
(ii)~each of its proper subsets that includes the instructions to 
complement the content of Boolean registers is 
$3$-size-bounded functionally complete and not $2$-size-bounded 
functionally complete (Corollary~\ref{corollary-bounded-complete-2}).

The use of a $1$-size-bounded functionally complete instruction 
set, such as the one referred to in the previous paragraph, gives rise 
to the smallest instruction sequence sizes.
However, the use of many instruction sets that are not $1$-size-bounded 
functionally complete, e.g.\ the ones referred to under~(i) above, gives 
rise to instruction sequence sizes that are at most $4$ times larger.

The work presented in~\cite{BM14e} triggered the work presented in this 
paper because of the choice to use an extension of the instruction set 
used in~\cite{BM13a}.
Since the results from the latter paper, with the exception of one 
auxiliary result, are concerned with upper bounds of instruction 
sequence size complexity, these results go through if instruction 
sequences may also contain instructions to complement the content of 
auxiliary Boolean registers.
However, for the work presented in~\cite{BM14e}, the instruction set did 
matter in the sense that we were not able to prove the main result of 
the paper, which is concerned with a lower bound of instruction sequence
size complexity, using the instruction set used in~\cite{BM13a}.
We consider the work presented in the current paper to be useful to 
adequately assess that result as it is.
We expect that it will turn out to be also useful to adequately assess 
results of future work that is concerned with lower bounds of 
instruction sequence size complexity.

Like the work presented in~\cite{BM13a,BM14e}, the work presented in 
this paper is carried out in the setting of \PGA\ (ProGram Algebra).
\PGA\ is an algebraic theory of single-pass instruction sequences that
was taken as the basis of the approach to the semantics of programming 
languages introduced in~\cite{BL02a}.
As a continuation of the work presented in~\cite{BL02a},
(i)~the notion of an instruction sequence was subjected to systematic
and precise analysis and
(ii)~issues relating to diverse subjects in computer science and 
computer engineering were rigorously investigated in the setting of 
\PGA. 
The subjects concerned include programming language expressiveness, 
computability, computational complexity, algorithm efficiency, 
algorithmic equivalence of programs, program verification, program 
compactness, micro-architecture, and probabilistic programming.
For a comprehensive survey of a large part of this work, 
see~\cite{BM12b}. 
An overview of all the work done to date in the setting of \PGA\ and 
open questions originating from this work can be found on~\cite{SiteIS}.

This paper is organized as follows.
First, we present the preliminaries to the work presented in this paper
(Sections~\ref{sect-PGA-and-BTA} and~\ref{sect-TSI}) and introduce the
possible instructions for Boolean registers 
(Section~\ref{sect-instr-br}).
Next, we define an equivalence relation on these instructions that 
identifies instructions that have the same effects 
(Section~\ref{sect-eqv-instr-br}) and study instruction sequence 
size bounded functional completeness of instruction sets for Boolean 
registers (Section~\ref{sect-bounded-complete}).
Finally, we make some concluding remarks (Section~\ref{sect-concl}).

Some familiarity with the basic notions related to algebraic theories 
and their models is assumed in this paper.
The relevant notions are explained in handbook chapters and books on the 
foundations of algebraic specification, 
e.g.~\cite{EM85a,ST99a,ST12a,Wir90a}.

The following should be mentioned in advance.
The set $\Bool$ is a set with two elements whose intended 
interpretations are the truth values \emph{false} and \emph{true}.
As is common practice, we represent the elements of $\Bool$ by the bits 
$0$ and $1$.
In line with generally accepted conventions, we use terminology based on
identification of the elements of $\Bool$ with their representation 
where appropriate.
For example, where a better link up with commonly used terminology is 
expected, the elements of $\Bool$ are loosely called bits and the 
elements of $\Bool^n$ are loosely called bit strings of length $n$.

The preliminaries to the work presented in this paper 
(Sections~\ref{sect-PGA-and-BTA} and~\ref{sect-TSI}) are almost the 
same as the preliminaries to the work presented in~\cite{BM14a} and 
earlier papers.
For this reason, there is some text overlap with those papers.
Apart from the preliminaries, the material in this paper is new.

\section{Program Algebra and Basic Thread Algebra}
\label{sect-PGA-and-BTA}

In this section, we give a survey of \PGA\ (ProGram Algebra) and \BTA\ 
(Basic Thread Algebra) and make precise in the setting of \BTA\ which 
behaviours are produced by the instruction sequences considered in \PGA\
under execution.
The greater part of this section originates from~\cite{BM13a}.
A comprehensive introduction to \PGA\ and \BTA, including examples, can 
among other things be found in~\cite{BM12b}.

In \PGA, it is assumed that there is a fixed but arbitrary set $\BInstr$
of \emph{basic instructions}.
The intuition is that the execution of a basic instruction may modify a 
state and produces a reply at its completion.
The possible replies are $\False$ and $\True$.
The actual reply is generally state-dependent.
The set $\BInstr$ is the basis for the set of instructions that may 
occur in the instruction sequences considered in \PGA.
The elements of the latter set are called \emph{primitive instructions}.
There are five kinds of primitive instructions:
\begin{itemize}
\item
for each $a \in \BInstr$, a \emph{plain basic instruction} $a$;
\item
for each $a \in \BInstr$, a \emph{positive test instruction} $\ptst{a}$;
\item
for each $a \in \BInstr$, a \emph{negative test instruction} $\ntst{a}$;
\item
for each $l \in \Nat$, a \emph{forward jump instruction} $\fjmp{l}$;
\item
a \emph{termination instruction} $\halt$.
\end{itemize}
We write $\PInstr$ for the set of all primitive instructions.

On execution of an instruction sequence, these primitive instructions
have the following effects:
\begin{itemize}
\item
the effect of a positive test instruction $\ptst{a}$ is that basic
instruction $a$ is executed and execution proceeds with the next
primitive instruction if $\True$ is produced and otherwise the next
primitive instruction is skipped and execution proceeds with the
primitive instruction following the skipped one --- if there is no
primitive instruction to proceed with,
inaction occurs;
\item
the effect of a negative test instruction $\ntst{a}$ is the same as
the effect of $\ptst{a}$, but with the role of the value produced
reversed;
\item
the effect of a plain basic instruction $a$ is the same as the effect
of $\ptst{a}$, but execution always proceeds as if $\True$ is produced;
\item
the effect of a forward jump instruction $\fjmp{l}$ is that execution
proceeds with the $l$th next primitive instruction --- if $l$ equals $0$ 
or there is no primitive instruction to proceed with, inaction occurs;
\item
the effect of the termination instruction $\halt$ is that execution 
terminates.
\end{itemize}

\PGA\ has one sort: the sort $\InSeq$ of \emph{instruction sequences}. 
We make this sort explicit to anticipate the need for many-sortedness
later on.
To build terms of sort $\InSeq$, \PGA\ has the following constants and 
operators:
\begin{itemize}
\item
for each $u \in \PInstr$, 
the \emph{instruction} constant $\const{u}{\InSeq}$\,;
\item
the binary \emph{concatenation} operator 
$\funct{\ph \conc \ph}{\InSeq \x \InSeq}{\InSeq}$\,;
\item
the unary \emph{repetition} operator 
$\funct{\ph\rep}{\InSeq}{\InSeq}$\,.
\end{itemize}
Terms of sort $\InSeq$ are built as usual in the one-sorted case.
We assume that there are infinitely many variables of sort $\InSeq$, 
including $X,Y,Z$.
We use infix notation for concatenation and postfix notation for
repetition.

A closed \PGA\ term is considered to denote a non-empty, finite or
eventually periodic infinite sequence of primitive instructions.%
\footnote
{An eventually periodic infinite sequence is an infinite sequence with
 only finitely many distinct suffixes.}
The instruction sequence denoted by a closed term of the form
$t \conc t'$ is the instruction sequence denoted by $t$
concatenated with the instruction sequence denoted by $t'$.
The instruction sequence denoted by a closed term of the form $t\rep$
is the instruction sequence denoted by $t$ concatenated infinitely
many times with itself.

Closed \PGA\ terms are considered equal if they represent the same
instruction sequence.
The axioms for instruction sequence equivalence are given in
Table~\ref{axioms-PGA}.%
\begin{table}[!t]
\caption{Axioms of \PGA}
\label{axioms-PGA}
\begin{eqntbl}
\begin{axcol}
(X \conc Y) \conc Z = X \conc (Y \conc Z)              & \axiom{PGA1} \\
(X^n)\rep = X\rep                                      & \axiom{PGA2} \\
X\rep \conc Y = X\rep                                  & \axiom{PGA3} \\
(X \conc Y)\rep = X \conc (Y \conc X)\rep              & \axiom{PGA4}
\end{axcol}
\end{eqntbl}
\end{table}
In this table, $n$ stands for an arbitrary natural number from 
$\Natpos$.%
\footnote
{We write $\Natpos$ for the set $\set{n \in \Nat \where n \geq 1}$ of
positive natural numbers.}
For each $n \in \Natpos$, the term $t^n$, where $t$ is a \PGA\ term, is 
defined by induction on $n$ as follows: $t^1 = t$, and 
$t^{n+1} = t \conc t^n$.

A typical model of \PGA\ is the model in which:
\begin{itemize}
\item
the domain is the set of all finite and eventually periodic infinite
sequences over the set $\PInstr$ of primitive instructions;
\item
the operation associated with ${} \conc {}$ is concatenation;
\item
the operation associated with ${}\rep$ is the operation ${}\srep$
defined as follows:

\begin{itemize}
\item
if $U$ is finite, then $U\srep$ is the unique infinite sequence $U'$ 
such that $U$ concatenated $n$ times with itself is a proper prefix of 
$U'$ for each $n \in \Nat$;
\item
if $U$ is infinite, then $U\srep$ is $U$.
\end{itemize}
\end{itemize}
It is immediately clear that this model has no proper subalgebra.
Moreover, we know from~\cite[Section~3.2.2]{BL02a} that the axioms of 
\PGA\ are complete with respect to satisfaction of equations between 
closed terms in this model. 
Hence, this model is an initial model of \PGA\ (see e.g.~\cite{ST99a}).

We confine ourselves to this model of \PGA\ for the interpretation of 
\PGA\ terms.
In the sequel, we use the term \emph{PGA instruction sequence} for the 
elements of the domain of this model.
Below, we will use \BTA\ to make precise which behaviours are produced 
by \PGA\ instruction sequences under execution.

In \BTA, it is assumed that a fixed but arbitrary set $\BAct$ of
\emph{basic actions} has been given.
The objects considered in \BTA\ are called threads.
A thread represents a behaviour which consists of performing basic 
actions in a sequential fashion.
Upon each basic action performed, a reply from an execution environment
determines how the thread proceeds.
The possible replies are the values $\False$ and $\True$.

\BTA\ has one sort: the sort $\Thr$ of \emph{threads}. 
We make this sort explicit to anticipate the need for many-sortedness
later on.
To build terms
of sort $\Thr$, \BTA\ has the following constants and operators:
\begin{itemize}
\item
the \emph{inaction} constant $\const{\DeadEnd}{\Thr}$;
\item
the \emph{termination} constant $\const{\Stop}{\Thr}$;
\item
for every $a \in \BAct$, the binary \emph{postconditional composition} 
operator $\funct{\pcc{\ph}{a}{\ph}}{\Thr \x \Thr}{\Thr}$.
\end{itemize}
Terms of sort $\Thr$ are built as usual in the one-sorted case. 
We assume that there are infinitely many variables of sort $\Thr$, 
including $x,y$.
We use infix notation for postconditional composition. 
We introduce \emph{basic action prefixing} as an abbreviation: 
$a \bapf t$, where $t$ is a \BTA\ term, abbreviates 
$\pcc{t}{a}{t}$.
We identify expressions of the form $a \bapf t$ with the \BTA\
term they stand for.

The thread denoted by a closed term of the form $\pcc{t}{a}{t'}$
will first perform $a$, and then proceed as the thread denoted by
$t$ if the reply from the execution environment is $\True$ and proceed
as the thread denoted by $t'$ if the reply from the execution
environment is $\False$. 
The thread denoted by $\Stop$ will do no more than terminate and the 
thread denoted by $\DeadEnd$ will become inactive.

Closed \BTA\ terms are considered equal if they are syntactically the
same.
Therefore, \BTA\ has no axioms.

Each closed \BTA\ term denotes a finite thread, i.e.\ a thread with a
finite upper bound to the number of basic actions that it can perform.
Infinite threads, i.e.\ threads without a finite upper bound to the
number of basic actions that it can perform, can be defined by means of 
a set of recursion equations (see e.g.~\cite{BM09k}).
We are only interested in models of \BTA\ in which sets of recursion 
equations have unique solutions, such as the projective limit model 
of \BTA\ presented in~\cite{BM12b}.

We confine ourselves to this model of \BTA, which has an initial model 
of \BTA\ as a submodel, for the interpretation of \BTA\ terms. 
In the sequel, we use the term \emph{BTA thread} or simply \emph{thread} 
for the elements of the domain of this model.

Regular threads, i.e.\ finite or infinite threads that can only be in a 
finite number of states, can be defined by means of a finite set of 
recursion equations.
Provided that the set $\BInstr$ of basic instructions is identified with 
the set $\BAct$ of basic actions, the behaviours produced by \PGA\ 
instruction sequences under execution are exactly the behaviours 
represented by regular threads and the behaviours produced by finite 
\PGA\ instruction sequences are exactly the behaviours represented by 
finite threads.

Henceforth, we will identify $\BInstr$ with $\BAct$.
Intuitively, this means that we will not distinguish the basic action 
that takes place when a basic instruction is executed from that 
basic instruction. 

We combine \PGA\ with \BTA, identifying $\BInstr$ with $\BAct$, and 
extend the combination with the \emph{thread extraction} operator 
$\funct{\extr{\ph}}{\InSeq}{\Thr}$, the axioms given in 
Table~\ref{axioms-thread-extr},%
\begin{table}[!tb]
\caption{Axioms for the thread extraction operator}
\label{axioms-thread-extr}
\begin{eqntbl}
\renewcommand{\arraystretch}{1.3}
\begin{eqncol}
\extr{a} = a \bapf \DeadEnd \\
\extr{a \conc X} = a \bapf \extr{X} \\
\extr{\ptst{a}} = a \bapf \DeadEnd \\
\extr{\ptst{a} \conc X} =
\pcc{\extr{X}}{a}{\extr{\fjmp{2} \conc X}} \\
\extr{\ntst{a}} = a \bapf \DeadEnd \\
\extr{\ntst{a} \conc X} =
\pcc{\extr{\fjmp{2} \conc X}}{a}{\extr{X}}
\end{eqncol}
\qquad
\begin{eqncol}
\extr{\fjmp{l}} = \DeadEnd \\
\extr{\fjmp{0} \conc X} = \DeadEnd \\
\extr{\fjmp{1} \conc X} = \extr{X} \\
\extr{\fjmp{l+2} \conc u} = \DeadEnd \\
\extr{\fjmp{l+2} \conc u \conc X} = \extr{\fjmp{l+1} \conc X} \\
\extr{\halt} = \Stop \\
\extr{\halt \conc X} = \Stop
\end{eqncol}
\end{eqntbl}
\end{table}
and the rule that $\extr{X} = \DeadEnd$ if $X$ has an infinite chain of 
forward jumps beginning at its first primitive instruction.%
\footnote
{This rule, which can be formalized using an auxiliary structural 
congruence predicate (see e.g.~\cite{BM12b}), is unnecessary when 
considering only finite \PGA\ instruction sequences.
}
In Table~\ref{axioms-thread-extr}, $a$ stands for an arbitrary basic 
instruction from $\BInstr$, $u$ stands for an arbitrary primitive 
instruction from $\PInstr$, and $l$ stands for an arbitrary natural 
number from $\Nat$.
For each closed \PGA\ term $t$, $\extr{t}$ denotes the behaviour  
produced by the instruction sequence denoted by $t$ under execution.

\section{Interaction of Threads with Services}
\label{sect-TSI}

Services are objects that represent the behaviours exhibited by 
components of execution environments of instruction sequences at a high 
level of abstraction.
A service is able to process certain methods.
For the purpose of the extension of \BTA\ that will be presented in this 
section, it is sufficient to know the following about methods:
(i)~the processing of a method by a service may involve a change of the 
service and
(ii)~at completion of the processing of a method by a service, the 
service produces a reply value. 
The possible reply values are $\False$ and $\True$.
Execution environments are considered to provide a family of 
uniquely-named services.

A thread may interact with the named services from the service family 
provided by an execution environment.
That is, a thread may perform a basic action for the purpose of 
requesting a named service to process a method and to return a reply 
value at completion of the processing of the method.
In this section, we give a survey of the extension of \BTA\ with 
services, service families, a composition operator for service families, 
and operators that are concerned with this kind of interaction.
This section originates from~\cite{BM09k}.
A comprehensive introduction to the presented extension of \BTA, 
including examples, can among other things be found in~\cite{BM12b}.

First, we introduce an algebraic theory of service families called 
\SFA\ (Service Family Algebra).
In \SFA, it is assumed that a fixed but arbitrary set $\Meth$ of 
\emph{methods} has been given.
Moreover, the following is assumed with respect to services:
\begin{itemize}
\item
a signature $\Sig{\Services}$ has been given that includes the following
sorts:
\begin{itemize}
\item
the sort $\Serv$ of
\emph{services};
\item
the sort $\Repl$ of \emph{replies};
\end{itemize}
and the following constants and operators:
\begin{itemize}
\item
the
\emph{empty service} constant $\const{\emptyserv}{\Serv}$;
\item
the \emph{reply} constants $\const{\False,\True,\Div}{\Repl}$;
\item
for each $m \in \Meth$, the
\emph{derived service} operator $\funct{\derive{m}}{\Serv}{\Serv}$;
\item
for each $m \in \Meth$, the
\emph{service reply} operator $\funct{\sreply{m}}{\Serv}{\Repl}$;
\end{itemize}
\item
a $\Sig{\Services}$-algebra $\ServAlg$ that has no proper subalgebra has 
been given in which the following holds:
\begin{itemize}
\item
$\False \neq \True$, $\True \neq \Div$, $\Div \neq \False$;
\item
for each $m \in \Meth$,
$\derive{m}(z) = \emptyserv \Liff \sreply{m}(z) = \Div$.
\end{itemize}
\end{itemize}

The intuition concerning $\derive{m}$ and $\sreply{m}$ is that on a
request to service $s$ to process method $m$:
\begin{itemize}
\item
if $\sreply{m}(s) \neq \Div$, $s$ processes $m$, produces the reply
$\sreply{m}(s)$, and then proceeds as $\derive{m}(s)$;
\item
if $\sreply{m}(s) = \Div$, $s$ is not able to process method $m$ and
proceeds as $\emptyserv$.
\end{itemize}
The empty service $\emptyserv$ itself is unable to process any method.

It is also assumed that a fixed but arbitrary set $\Foci$ of
\emph{foci} has been given.
Foci play the role of names of services in a service family. 

\SFA\ has the sorts, constants and operators from $\Sig{\Services}$ and
in addition the sort $\ServFam$ of \emph{service families} and the 
following constant and operators:
\begin{itemize}
\item
the
\emph{empty service family} constant $\const{\emptysf}{\ServFam}$;
\item
for each $f \in \Foci$, the unary
\emph{singleton service family} operator
$\funct{\mathop{f{.}} \ph}{\Serv}{\ServFam}$;
\item
the binary
\emph{service family composition} operator
$\funct{\ph \sfcomp \ph}{\ServFam \x \ServFam}{\ServFam}$;
\item
for each $F \subseteq \Foci$, the unary
\emph{encapsulation} operator $\funct{\encap{F}}{\ServFam}{\ServFam}$.
\end{itemize}
We assume that there are infinitely many variables of sort $\Serv$,
including $z$, and infinitely many variables of sort $\ServFam$,
including $u,v,w$.
Terms are built as usual in the many-sorted case 
(see e.g.~\cite{ST12a}).
We use prefix notation for the singleton service family operators and
infix notation for the service family composition operator.
We write $\Sfcomp{i = 1}{n} t_i$, where $t_1,\ldots,t_n$ are
terms of sort $\ServFam$, for the term
$t_1 \sfcomp \ldots \sfcomp t_n$.

The service family denoted by $\emptysf$ is the empty service family.
The service family denoted by a closed term of the form $f.t$ consists 
of one named service only, the service concerned is the service denoted 
by $t$, and it is named $f$.
The service family denoted by a closed term of the form
$t \sfcomp t'$ consists of all named services that belong to either the
service family denoted by $t$ or the service family denoted by $t'$.
In the case where a named service from the service family denoted by
$t$ and a named service from the service family denoted by $t'$ have
the same name, they collapse to an empty service with the name
concerned.
The service family denoted by a closed term of the form $\encap{F}(t)$ 
consists of all named services with a name not in $F$ that belong to
the service family denoted by $t$.

The axioms of \SFA\ are given in 
Table~\ref{axioms-SFA}.%
\begin{table}[!t]
\caption{Axioms of \SFA}
\label{axioms-SFA}
{
\begin{eqntbl}
\begin{axcol}
u \sfcomp \emptysf = u                                 & \axiom{SFC1} \\
u \sfcomp v = v \sfcomp u                              & \axiom{SFC2} \\
(u \sfcomp v) \sfcomp w = u \sfcomp (v \sfcomp w)      & \axiom{SFC3} \\
f.z \sfcomp f.z' = f.\emptyserv                        & \axiom{SFC4}
\end{axcol}
\qquad
\begin{saxcol}
\encap{F}(\emptysf) = \emptysf                       & & \axiom{SFE1} \\
\encap{F}(f.z) = \emptysf            & \mif f \in F    & \axiom{SFE2} \\
\encap{F}(f.z) = f.z                 & \mif f \notin F & \axiom{SFE3} \\
\multicolumn{2}{@{}l@{\quad}}
 {\encap{F}(u \sfcomp v) =
  \encap{F}(u) \sfcomp \encap{F}(v)}                   & \axiom{SFE4}
\end{saxcol}
\end{eqntbl}
}
\end{table}
In this table, $f$ stands for an arbitrary focus from $\Foci$ and
$F$ stands for an arbitrary subset of $\Foci$.
These axioms simply formalize the informal explanation given
above.

For the set $\BAct$ of basic actions, we now take 
$\set{f.m \where f \in \Foci, m \in \Meth}$.
Performing a basic action $f.m$ is taken as making a request to the
service named $f$ to process method $m$.

We combine \BTA\ with \SFA\ and extend the combination with the 
following operators:
\pagebreak[2]
\begin{itemize}
\item
the binary \emph{abstracting use} operator
$\funct{\ph \sfause \ph}{\Thr \x \ServFam}{\Thr}$;
\item
the binary \emph{apply} operator
$\funct{\ph \sfapply \ph}{\Thr \x \ServFam}{\ServFam}$;
\end{itemize}
and the axioms given in Tables~\ref{axioms-abstracting-use} 
and~\ref{axioms-apply}.%
\begin{table}[!t]
\caption{Axioms for the abstracting use operator}
\label{axioms-abstracting-use}
\begin{eqntbl}
\begin{saxcol}
\Stop  \sfause u = \Stop                              & & \axiom{AU1} \\
\DeadEnd \sfause u = \DeadEnd                         & & \axiom{AU2} \\
(\pcc{x}{f.m}{y}) \sfause \encap{\set{f}}(u) =
\pcc{(x \sfause \encap{\set{f}}(u))}
 {f.m}{(y \sfause \encap{\set{f}}(u))}                & & \axiom{AU3} \\
(\pcc{x}{f.m}{y}) \sfause (f.t \sfcomp \encap{\set{f}}(u)) =
x \sfause (f.\derive{m}t \sfcomp \encap{\set{f}}(u))
                          & \mif \sreply{m}(t) = \True  & \axiom{AU4} \\
(\pcc{x}{f.m}{y}) \sfause (f.t \sfcomp \encap{\set{f}}(u)) =
y \sfause (f.\derive{m}t \sfcomp \encap{\set{f}}(u))
                          & \mif \sreply{m}(t) = \False & \axiom{AU5} \\
(\pcc{x}{f.m}{y}) \sfause (f.t \sfcomp \encap{\set{f}}(u)) = \DeadEnd
                          & \mif \sreply{m}(t) = \Div   & \axiom{AU6}
\end{saxcol}
\end{eqntbl}
\end{table}
\begin{table}[!t]
\caption{Axioms for the apply operator}
\label{axioms-apply}
\begin{eqntbl}
\begin{saxcol}
\Stop  \sfapply u = u                                  & & \axiom{A1} \\
\DeadEnd \sfapply u = \emptysf                         & & \axiom{A2} \\
(\pcc{x}{f.m}{y}) \sfapply \encap{\set{f}}(u) = \emptysf
                                                       & & \axiom{A3} \\
(\pcc{x}{f.m}{y}) \sfapply (f.t \sfcomp \encap{\set{f}}(u)) =
x \sfapply (f.\derive{m}t \sfcomp \encap{\set{f}}(u))
                           & \mif \sreply{m}(t) = \True  & \axiom{A4} \\
(\pcc{x}{f.m}{y}) \sfapply (f.t \sfcomp \encap{\set{f}}(u)) =
y \sfapply (f.\derive{m}t \sfcomp \encap{\set{f}}(u))
                           & \mif \sreply{m}(t) = \False & \axiom{A5} \\
(\pcc{x}{f.m}{y}) \sfapply (f.t \sfcomp \encap{\set{f}}(u)) = \emptysf
                           & \mif \sreply{m}(t) = \Div   & \axiom{A6}
\end{saxcol}
\end{eqntbl}
\end{table}
In these tables, $f$ stands for an arbitrary focus from $\Foci$, $m$ 
stands for an arbitrary method from $\Meth$, and $t$ stands for an 
arbitrary term of sort $\Serv$.
The axioms formalize the informal explanation given below and in 
addition stipulate what is the result of abstracting use and apply if 
inappropriate foci or methods are involved.
We use infix notation for the abstracting use and apply operators.

The thread denoted by a closed term of the form $t \sfause t'$ and the
service family denoted by a closed term of the form $t \sfapply t'$ are
the thread and service family, respectively, that result from processing
the method of each basic action performed by the thread denoted by $t$
by the service in the service family denoted by $t'$ with the focus
of the basic action as its name if such a service exists.
When the method of a basic action performed by a thread is processed by
a service, the service changes in accordance with the method concerned
and the thread reduces to one of the two threads that it can possibly 
proceed with dependent on the reply value produced by the service.

The projective limit model of the extension of the combination of \BTA\ 
and \SFA\ with the abstracting use operator, the apply operator, and the 
axioms for these operators is a reduct of the projective limit model 
presented in~\cite[Section~3.1.9]{BM12b}. 
The reduct of this model to the constants and operators of \BTA\ is
the projective limit model of \BTA.

\section{Instructions for Boolean Registers}
\label{sect-instr-br}

The primitive instructions that concern us in the remainder of this 
paper are primitive instructions for Boolean registers.
We introduce in this section the possible primitive instructions for 
Boolean registers.

It is assumed that, for each $\funct{p,q}{\Bool}{\Bool}$, 
$\mbr{p}{q} \in \Meth$.
These methods can be explained as follows:
\begin{quote}
when $\mbr{p}{q}$ is processed by a Boolean register service whose 
register content is $b$, the reply is $p(b)$ and the register content 
becomes $q(b)$.
\end{quote}
We write $\Methbr$ for the set 
$\set{\mbr{p}{q} \where \funct{p,q}{\Bool}{\Bool}}$.
Every method that a Boolean register service could possibly process is a 
method from $\Methbr$.

For $\Sig{\Services}$, we take the signature that consists of the sorts,
constants and operators that are mentioned in the assumptions with
respect to services made in Section~\ref{sect-TSI} and a constant 
$\BR^M_b$ for each $M \subseteq \Methbr$ and $b \in \Bool$.
Informally, $\BR^M_b$ denotes the Boolean register service with register
content $b$ that is able to process precisely all methods from $M$.

For $\ServAlg$, we take the $\Sig{\Services}$-algebra that has no 
proper subalgebra and that satisfies the conditions that are mentioned 
in the assumptions with respect to services made in 
Section~\ref{sect-TSI} and the following conditions for each 
$M \subseteq \Methbr$ and $b \in \Bool$:
\begin{ldispl}
\begin{gceqns}
\derive{\mbr{p}{q}}(\BR^M_b) = \BR^M_{q(b)} & \mif \mbr{p}{q} \in M\;,
\eqnsep
\sreply{\mbr{p}{q}}(\BR^M_b) = p(b)         & \mif \mbr{p}{q} \in M\;,
\end{gceqns}
\qquad
\begin{gceqns}
\derive{m}(\BR^M_b) = \emptyserv & \mif m \notin M\;,
\eqnsep
\sreply{m}(\BR^M_b) = \Div       & \mif m \notin M\;.
\end{gceqns}
\end{ldispl}%

$\Bool \to \Bool$, the set of all unary Boolean functions, consists of 
the following four functions:
\begin{itemize}
\item
the function $\FFunc$, satisfying 
$\FFunc(\False) = \False$ and $\FFunc(\True) = \False$;
\item
the function $\TFunc$, satisfying 
$\TFunc(\False) = \True$ and $\TFunc(\True) = \True$;
\item
the function $\IFunc$, satisfying 
$\IFunc(\False) = \False$ and $\IFunc(\True) = \True$;
\item
the function $\CFunc$, satisfying 
$\CFunc(\False) = \True$ and $\CFunc(\True) = \False$.
\end{itemize}
In~\cite{BM13a}, we actually used the methods $\mbr{\FFunc}{\FFunc}$, 
$\mbr{\TFunc}{\TFunc}$, and $\mbr{\IFunc}{\IFunc}$, but denoted them by 
$\setbr{0}$, $\setbr{1}$ and $\getbr$, respectively.
In~\cite{BM14e}, we actually used, in addition to these methods, the 
method $\mbr{\CFunc}{\CFunc}$, but denoted it by $\negbr$.

We define, for each $M \subseteq \Methbr$, the following sets:
\begin{ldispl}
\BIbr(M) = \set{f.m \where f \in \Foci \Land m \in M}\;,
\eqnsep
\PIbr(M) = 
\BIbr(M) \union \set{\ptst{a} \where a \in \BIbr(M)} \union
\set{\ntst{a} \where a \in \BIbr(M)}\;.
\end{ldispl}%
$\BIbr(\Methbr)$ consists of 16 basic actions per focus and 
$\PIbr(\Methbr)$ consists of 48 primitive instructions per focus.

For Boolean registers that serve as input register, we used 
in~\cite{BM13a,BM14e} only primitive instructions from 
$\PIbr(\set{\mbr{\IFunc}{\IFunc}})$. 
For Boolean registers that serve as output register, we used 
in~\cite{BM13a,BM14e} only primitive instructions  from 
$\PIbr(\set{\mbr{\FFunc}{\FFunc},\mbr{\TFunc}{\TFunc}})$.
For Boolean registers that serve as auxiliary register, we used 
in~\cite{BM13a} only primitive instructions from
$\PIbr(\set{\mbr{\FFunc}{\FFunc},\mbr{\TFunc}{\TFunc},
                                 \mbr{\IFunc}{\IFunc}})$
and in~\cite{BM14e} only primitive instructions from 
$\PIbr(\set{\mbr{\FFunc}{\FFunc},\mbr{\TFunc}{\TFunc},
            \mbr{\IFunc}{\IFunc},\mbr{\CFunc}{\CFunc}})$.
However, in the case of auxiliary registers,  
other possible instruction sets are eligible.
In Section~\ref{sect-bounded-complete}, we study instruction sequence 
size-bounded functional completeness of instruction sets for Boolean 
registers.
We expect that the results of that study will turn out to be useful to 
adequately assess results of work that is concerned with lower bounds of 
instruction sequence size complexity in cases where auxiliary Boolean 
registers may be used.

We write $\ISbr(M)$, where $M \subseteq \Methbr$, for the set of all 
finite \PGA\ instruction sequences in the case where $\BIbr(M)$ is taken
for the set $\BInstr$ of basic instructions.

\section{Equivalence of Instructions for Boolean Registers}
\label{sect-eqv-instr-br}

There exists a model of the extension of the combination of \PGA, \BTA, 
and \SFA\ with the thread extraction operator, the abstracting use 
operator, the apply operator, and the axioms for these operators such 
that the initial model of \PGA\ is its reduct to the signature of \PGA\
and the projective limit model of the extension of the combination of 
\BTA\ and \SFA\ with the abstracting use operator, the apply operator, 
and the axioms for these operators is its reduct to the signature of 
this theory.
This follows from the disjointness of the signatures concerned by the 
amalgamation result about expansions presented as Theorem 6.1.1 
in~\cite{Hod93a} (adapted to the many-sorted case).

Henceforth, we work in the model just mentioned, and denote the 
interpretations of constants and operators in it by the constants and 
operators themselves.
However, we could work in any model for which the axioms are complete 
with respect to satisfaction of equations between closed terms.

On execution of an instruction sequence, different primitive 
instructions from $\PIbr(\Methbr)$ do not always have different effects.
We define an equivalence on $\PIbr(\Methbr)$ that identifies primitive 
instructions if they have the same effects.

Let $u,v \in \PIbr(\Methbr)$. 
Then $u$ and $v$ are \emph{effectually equivalent}, written $u \eeqv v$, 
if there exists an $f \in \Foci$ such that, for each $b \in \Bool$ and 
$n \in \set{1,2}$:
\begin{center}
\lststretch
\begin{tabular}[t]{@{}l@{}}
$\extr{u \conc \halt^n} \sfause f.\BR^\Methbr_b =
 \extr{v \conc \halt^n} \sfause f.\BR^\Methbr_b\;$,
\\
$\extr{u \conc \halt^n} \sfapply f.\BR^\Methbr_b \hsp{.2} {} =
 \extr{v \conc \halt^n} \sfapply f.\BR^\Methbr_b\;$.
\end{tabular}
\end{center}

Let $u,v \in \PIbr(\Methbr)$ be such that $u \neq v$, and 
let $f \in \Foci$ be such that, for some $m \in \Methbr$, 
$v \equiv f.m$ or $v \equiv \ptst{f.m}$ or $v \equiv \ntst{f.m}$.
Then $u \eeqv v$ only if 
$\extr{u \conc \halt^n} \sfause f.\BR^\Methbr_b =
 \extr{v \conc \halt^n} \sfause f.\BR^\Methbr_b$ and
$\extr{u \conc \halt^n} \sfapply f.\BR^\Methbr_b =
 \extr{v \conc \halt^n} \sfapply f.\BR^\Methbr_b$ 
for each $b \in \Bool$ and $n \in \set{1,2}$. 
From this and the definition of $\eeqv$, it follows immediately that 
$\eeqv$ is transitive.
Moreover, it follows immediately from the definition of $\eeqv$ that 
$\eeqv$ is reflexive and symmetric.
Hence, $\eeqv$ is an equivalence relation indeed.

Replacement of primitive instructions in an instruction sequence by 
effectually equivalent ones does not change the functionality of the
instruction sequence on execution.

Let $X,Y \in \ISbr(\Methbr)$. 
Then $X$ and $Y$ are \emph{functionally equivalent}, 
written $X \feqv Y$, if, for some $n \in \Natpos$,
there exist $f_1,\ldots,f_n \in \Foci$ such that, 
for each $b_1,\ldots,b_n \in \Bool$:
\begin{center}
\lststretch
\begin{tabular}[t]{@{}l@{}}
$\extr{X} \sfause \Sfcomp{i = 1}{n} f_i.\BR^\Methbr_{b_i} = \Stop$ or
$\extr{X} \sfause \Sfcomp{i = 1}{n} f_i.\BR^\Methbr_{b_i} = \DeadEnd$,
\\
$\extr{X} \sfause \Sfcomp{i = 1}{n} f_i.\BR^\Methbr_{b_i} =
 \extr{Y} \sfause \Sfcomp{i = 1}{n} f_i.\BR^\Methbr_{b_i}$,
\\
$\extr{X} \sfapply \Sfcomp{i = 1}{n} f_i.\BR^\Methbr_{b_i} \hsp{.2} {} =
 \extr{Y} \sfapply \Sfcomp{i = 1}{n} f_i.\BR^\Methbr_{b_i}$.
\end{tabular}
\end{center}

The proof that $\feqv$ is an equivalence relation goes along similar 
lines as the proof that $\eeqv$ is an equivalence relation.
Here, $X \feqv Y$ only if the equations from the definition hold in the 
case where we take the foci of primitive instructions from 
$\PIbr(\Methbr)$ that occur in $Y$ for $f_1,\ldots,f_n$.

\begin{proposition}
\label{proposition-eff-equiv}
Let $u,v \in \PIbr(\Methbr)$, and
let $X,Y \in \ISbr(\Methbr)$ be such that $Y$ is $X$ with every 
occurrence of $u$ replaced by $v$. 
Then $u \eeqv v$ implies $X \feqv Y$.
\end{proposition}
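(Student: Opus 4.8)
The plan is to verify the three defining conditions of $X \feqv Y$ directly, taking for $f_1,\ldots,f_n$ all foci occurring in $X$ and working with the service family $\mathcal{F} = \Sfcomp{i = 1}{n} f_i.\BR^\Methbr_{b_i}$ as the $b_i$ range over $\Bool$. First I would dispose of the trivial case $u \equiv v$, where $X \equiv Y$ and the claim is just reflexivity of $\feqv$; likewise if $u$ does not occur in $X$ then $Y \equiv X$. So from now on $u \neq v$ and $u$ occurs in $X$. By the remark following the definition of $\eeqv$, the witnessing focus in $u \eeqv v$ may be taken to be the focus $f$ of $v$; moreover this forces $u$ to have focus $f$ as well, for if $u$ had a different focus then against $f.\BR^\Methbr_b$ exactly one of $\extr{u \conc \halt^n} \sfause f.\BR^\Methbr_b$ and $\extr{v \conc \halt^n} \sfause f.\BR^\Methbr_b$ would retain an unprocessed basic action (the other being $\Stop$ or $\DeadEnd$), which the first equation of $\eeqv$ forbids. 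Hence $u$ and $v$ act at one common focus $f \in \set{f_1,\ldots,f_n}$, and $X$ and $Y$ have exactly the same foci.

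The $\Stop$-or-$\DeadEnd$ condition is then uniform: every basic action performed by $\extr{X}$ has the form $f_i.m$ with $m \in \Methbr$, and each service $f_i.\BR^\Methbr_{b_i}$ processes every method in $\Methbr$, so no reply $\Div$ arises and execution of the finite sequence $X$ must end in $\Stop$ or $\DeadEnd$ by the abstracting-use axioms. The heart of the matter is the two equalities, and for these I would first extract the operational content of $u \eeqv v$ by instantiating its defining equations at $n \in \set{1,2}$ and each $b \in \Bool$. From the apply equation at $n = 2$ one reads off that $u$ and $v$ impose the same register update $f.\BR^\Methbr_b \mapsto f.\BR^\Methbr_{c(b)}$, and from the use equation at $n = 1$ one reads off that they branch in the same direction (proceed to the next primitive instruction, signalled by $\Stop$, versus skip it, signalled by $\DeadEnd$). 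Thus, at every register content, $u$ and $v$ are indistinguishable in both their state effect and their control effect.

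With these facts I would prove the two equalities by induction on the length of $X$, which equals that of $Y$, since the replacement of $u$ by $v$ changes neither the number of primitive instructions nor the targets of forward jumps. At each step I unfold the leading instruction with the thread-extraction axioms of Table~\ref{axioms-thread-extr}, execute it with the axioms of Tables~\ref{axioms-abstracting-use} and~\ref{axioms-apply}, and apply the induction hypothesis to the resulting suffix, which is strictly shorter even in the skipping branch, where $\extr{\fjmp{2} \conc X'}$ reduces to the thread extracted from $X'$ with its head removed. When the leading instruction is a forward jump, a termination instruction, or a register instruction other than $u$, it is common to $X$ and $Y$, so the step is identical and the remainders are again related by replacing $u$ with $v$. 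The interesting case is a leading $u$ in $X$ against a leading $v$ in $Y$: the operational content above forces both executions to perform the same update at focus $f$ and to continue at the same position, so the tails reached are corresponding suffixes of $X$ and $Y$ and the hypothesis applies.

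The main obstacle is precisely this last case, because $u$ and $v$ need not be primitive instructions of the same kind — a plain basic instruction versus a positive or negative test instruction, say — so their extracted threads differ syntactically and no purely syntactic replacement argument on threads is available. One must instead use the concrete service $f.\BR^\Methbr_b$ to resolve the test branches, and it is exactly here that the operational reading of $\eeqv$ is indispensable: it guarantees that a direction in which $u$ could branch but $v$ could not (for instance a skip, which a plain instruction can never perform) simply never occurs at any content, so the two computations stay synchronized step by step. The residual work is bookkeeping, namely checking via the localized form of the use and apply axioms that executing $u$ or $v$ touches only the register at the shared focus $f$ and leaves every other $f_i.\BR^\Methbr_{b_i}$ unchanged, so that the single-register content of $\eeqv$ lifts to the family $\mathcal{F}$.
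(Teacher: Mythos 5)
Your argument is correct and follows essentially the same route as the paper's: an induction on the length of $X$ that uses the operational content of $u \eeqv v$ (same register update and same branch direction at every register content) to keep the two executions synchronized in the inductive step. The paper merely packages the bookkeeping you carry out by hand for jumps, skipped instructions, and sequence ends into one strengthened induction statement, namely that $\fjmp{l} \conc X \conc \halt^n \feqv \fjmp{l} \conc Y \conc \halt^n$ for all $l,n \in \Nat$.
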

\begin{proof}
It is easily proved by induction on the length of $X$ that $u \eeqv v$ 
implies, for each $l,n \in \Nat$, 
$\fjmp{l} \conc X \conc \halt^n \feqv \fjmp{l} \conc Y \conc \halt^n$.%
\footnote
{We use the convention that $t'\conc t^0$ stands for $t'$.}
From this, it follows immediately that $u \eeqv v$ implies $X \feqv Y$.
\qed
\end{proof}

Axioms for effectual equivalence are given in 
Table~\ref{axioms-eff-equiv}.
\begin{table}[!t]
\caption{Axioms for effectual equivalence}
\label{axioms-eff-equiv}
\begin{eqntbl}
\begin{eqncol}
\ptst{f.\mbr{\FFunc}{p}} \eeqv \ntst{f.\mbr{\TFunc}{p}} \\
\ptst{f.\mbr{\TFunc}{p}} \eeqv \ntst{f.\mbr{\FFunc}{p}} \\
\ptst{f.\mbr{\IFunc}{p}} \eeqv \ntst{f.\mbr{\CFunc}{p}} \\
\ptst{f.\mbr{\CFunc}{p}} \eeqv \ntst{f.\mbr{\IFunc}{p}} 
\end{eqncol}
\qquad
\begin{eqncol}
\ptst{f.\mbr{\TFunc}{p}} \eeqv f.\mbr{q}{p} 
\end{eqncol}
\qquad
\begin{eqncol}
u \eeqv u \\
u \eeqv v \Limpl v \eeqv u \\
u \eeqv v \Land v \eeqv w \Limpl u \eeqv w 
\end{eqncol}
\end{eqntbl}
\end{table}
In this table, $f$ stands for an arbitrary focus from $\Foci$, $p$ and 
$q$ stand for arbitrary functions from $\Bool \to \Bool$, and $u$, $v$, 
and $w$ stand for arbitrary primitive instructions from 
$\PIbr(\Methbr)$. 
Moreover, we use $\eeqv$ in this table as a predicate symbol (and not as 
the symbol that denotes the effectual equivalence relation on 
$\PIbr(\Methbr)$ defined above).
\begin{theorem}
\label{theorem-eff-equiv}
The axioms in Table~\ref{axioms-eff-equiv} are sound and complete for 
the effectual equivalence relation on $\PIbr(\Methbr)$ defined above.
\end{theorem}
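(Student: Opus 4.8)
The plan is to reduce both soundness and completeness to a single semantic characterization: for $u,v \in \PIbr(\Methbr)$, we have $u \eeqv v$ if and only if $u$ and $v$ have the same focus, the same \emph{set function}, and the same \emph{test function}. Writing $u$ with focus $f$ and method $\mbr{p}{q}$, I define the set function of $u$ to be $q$, and the test function of $u$ to be $\TFunc$ if $u$ is a plain basic instruction, $p$ if $u \equiv \ptst{f.\mbr{p}{q}}$, and the function $b \mapsto \CFunc(p(b))$ if $u \equiv \ntst{f.\mbr{p}{q}}$. Intuitively, the set function records the new register content, and the test function records exactly when execution proceeds to the next primitive instruction rather than skipping it.

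To establish this characterization I would first compute, using the axioms for thread extraction, the threads $\extr{u \conc \halt^n}$ for $n \in \set{1,2}$: for $n = 2$ every primitive instruction with method $\mbr{p}{q}$ yields $\pcc{\Stop}{f.\mbr{p}{q}}{\Stop}$, whereas for $n = 1$ a plain basic instruction yields $\pcc{\Stop}{f.\mbr{p}{q}}{\Stop}$, a positive test yields $\pcc{\Stop}{f.\mbr{p}{q}}{\DeadEnd}$, and a negative test yields $\pcc{\DeadEnd}{f.\mbr{p}{q}}{\Stop}$. Applying the axioms for the abstracting use and apply operators with the single register $f.\BR^\Methbr_b$ then shows that $\extr{u \conc \halt^2} \sfause f.\BR^\Methbr_b = \Stop$ and $\extr{u \conc \halt^2} \sfapply f.\BR^\Methbr_b = f.\BR^\Methbr_{\sigma(b)}$ for every $b$, where $\sigma$ is the set function of $u$, so the $n = 2$ observations reveal exactly the set function even in the skipping branch; and that, for $n = 1$, the abstracting use gives $\Stop$ precisely when the test function of $u$ evaluates to $\True$ at $b$ and $\DeadEnd$ otherwise, while the apply gives $f.\BR^\Methbr_{\sigma(b)}$ or $\emptysf$ accordingly. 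Hence, for a fixed register focus, the full observable data is exactly the pair consisting of the test function and the set function, which yields the ``if'' direction, and two instructions differing in set function or in test function are separated by some $n = 2$ apply observation or some $n = 1$ use observation respectively, which yields the ``only if'' direction. I would also argue that equivalent instructions must share a focus: if $u \eeqv v$ with $u \neq v$, then by the observation used above to establish transitivity of $\eeqv$ the defining equations hold with the register focus taken to be the focus of $v$; were the focus of $u$ different, then $v$'s method would be processed while $u$'s would be left pending, so $v$'s side would reduce to $\Stop$ or $\DeadEnd$ while $u$'s side would still be a thread performing its unprocessed basic action, which are distinct in the model. Getting the $n = 2$ case to pin down the set function in the skipping branch, and handling this cross-focus case, is the main obstacle; the remaining computations are routine.

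Given the characterization, soundness is immediate. Each of the five schematic axioms relates two instructions that visibly share focus, set function, and test function; for example, $\ptst{f.\mbr{\IFunc}{p}}$ and $\ntst{f.\mbr{\CFunc}{p}}$ both have set function $p$ and test function $\IFunc$ (since $b \mapsto \CFunc(\CFunc(b))$ is $\IFunc$), and $\ptst{f.\mbr{\TFunc}{p}}$ and $f.\mbr{q}{p}$ both have set function $p$ and test function $\TFunc$. The three equivalence-relation axioms are sound because $\eeqv$ is reflexive, symmetric, and transitive, as already shown in the excerpt. Therefore every statement derivable from the axioms is a genuine effectual equivalence.

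For completeness I would show that the axioms derive every instance permitted by the characterization. Fixing a focus $f$ and a set function $q$, the twelve instructions with focus $f$ and set function $q$ split, according to their test function, into the four groups determined by the test function being $\FFunc$, $\TFunc$, $\IFunc$, or $\CFunc$. The groups with test function $\FFunc$, $\IFunc$, and $\CFunc$ each contain exactly one positive and one negative test instruction, joined by the first, third, and fourth axiom respectively. The group with test function $\TFunc$ contains the four plain basic instructions $f.\mbr{p'}{q}$, the positive test $\ptst{f.\mbr{\TFunc}{q}}$, and the negative test $\ntst{f.\mbr{\FFunc}{q}}$, and all six are connected to the hub $\ptst{f.\mbr{\TFunc}{q}}$ by the instances of the fifth axiom (as its free reply parameter ranges over the four functions) together with the second axiom. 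Closing each group under reflexivity, symmetry, and transitivity yields derivability of $u \eeqv v$ for every pair with the same focus, set function, and test function, which by the characterization is exactly the effectual equivalence relation. This completes the plan.
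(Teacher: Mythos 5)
Your proposal is correct and follows essentially the same route as the paper's proof: the paper likewise reads off from the $n\in\set{1,2}$ observations a set of necessary conditions on the reply and update functions (its facts (a)--(c), which your ``same set function, same test function'' invariant repackages in a single characterization) and then finishes by a case analysis matching the sixteen semantic classes per focus against the axiom instances. Your version merely spells out the thread-extraction and use/apply computations and the cross-focus case that the paper leaves implicit.
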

\begin{proof}
The soundness of the axioms follows immediately from the definition of 
effectual equivalence, using the conditions on $\ServAlg$ laid down in 
Section~\ref{sect-instr-br}.

The following conclusions can be drawn from the definition of effectual 
equivalence:
\begin{flushleft}
\lststretch
\begin{tabular}[t]{@{\hsp{5}}l@{\hsp{.7}}l@{}}
\textup{(a)} &
$\ptst{f.\mbr{p}{q}} \eeqv f.\mbr{p'}{q'} \Limpl  
 p = \TFunc \Land q = q'$\,;
\\
\textup{(b)} &
$\ntst{f.\mbr{p}{q}} \eeqv f.\mbr{p'}{q'} \Limpl 
 p = \FFunc \Land q = q'$\,;
\\
\textup{(c)} &
$\ptst{f.\mbr{p}{q}} \eeqv \ntst{f.\mbr{p'}{q'}} \Limpl 
 p = C(p') \Land q = q'$\,,
\\ &
where $C(\FFunc) = \TFunc$, $C(\TFunc) = \FFunc$, $C(\IFunc) = \CFunc$, 
$C(\CFunc) = \IFunc$.
\end{tabular}
\end{flushleft}
The completeness of the axioms follows easily by case distinction 
between the different forms that a formula $u \eeqv v$ can take, making
use of~(a), (b), and~(c).
\qed
\end{proof}

The equivalence classes of $\PIbr(\Methbr)$ with respect to $\eeqv$ are
the following for each $f \in \Foci$:
\begin{ldispl}
\set{\ul{\ptst{f.\mbr{\FFunc}{\FFunc}}},
     \ntst{f.\mbr{\TFunc}{\FFunc}}}\;, \\
\set{\ptst{f.\mbr{\FFunc}{\TFunc}},
     \ul{\ntst{f.\mbr{\TFunc}{\TFunc}}}}\;, \\
\set{\ptst{f.\mbr{\FFunc}{\IFunc}},
     \ul{\ntst{f.\mbr{\TFunc}{\IFunc}}}}\;, \\
\set{\ptst{f.\mbr{\FFunc}{\CFunc}},
     \ul{\ntst{f.\mbr{\TFunc}{\CFunc}}}}\;, 
\\
\set{\ptst{f.\mbr{\TFunc}{\FFunc}},\ntst{f.\mbr{\FFunc}{\FFunc}},
     \ul{f.\mbr{\FFunc}{\FFunc}},f.\mbr{\TFunc}{\FFunc},
     f.\mbr{\IFunc}{\FFunc},f.\mbr{\CFunc}{\FFunc}}\;, \\
\set{\ptst{f.\mbr{\TFunc}{\TFunc}},\ntst{f.\mbr{\FFunc}{\TFunc}},
     f.\mbr{\FFunc}{\TFunc},\ul{f.\mbr{\TFunc}{\TFunc}},
     f.\mbr{\IFunc}{\TFunc},f.\mbr{\CFunc}{\TFunc}}\;, \\
\set{\ptst{f.\mbr{\TFunc}{\IFunc}},\ntst{f.\mbr{\FFunc}{\IFunc}},
     f.\mbr{\FFunc}{\IFunc},f.\mbr{\TFunc}{\IFunc},
     \ul{f.\mbr{\IFunc}{\IFunc}},f.\mbr{\CFunc}{\IFunc}}\;, \\
\set{\ptst{f.\mbr{\TFunc}{\CFunc}},\ntst{f.\mbr{\FFunc}{\CFunc}},
     f.\mbr{\FFunc}{\CFunc},f.\mbr{\TFunc}{\CFunc},
     f.\mbr{\IFunc}{\CFunc},\ul{f.\mbr{\CFunc}{\CFunc}}}\;, 
\\
\set{\ul{\ptst{f.\mbr{\IFunc}{\FFunc}}},
     \ntst{f.\mbr{\CFunc}{\FFunc}}}\;, \\
\set{\ul{\ptst{f.\mbr{\IFunc}{\TFunc}}},
     \ntst{f.\mbr{\CFunc}{\TFunc}}}\;, \\
\set{\ul{\ptst{f.\mbr{\IFunc}{\IFunc}}},
     \ntst{f.\mbr{\CFunc}{\IFunc}}}\;, \\
\set{\ptst{f.\mbr{\IFunc}{\CFunc}},
     \ul{\ntst{f.\mbr{\CFunc}{\CFunc}}}}\;, 
\\
\set{\ptst{f.\mbr{\CFunc}{\FFunc}},
     \ul{\ntst{f.\mbr{\IFunc}{\FFunc}}}}\;, \\
\set{\ptst{f.\mbr{\CFunc}{\TFunc}},
     \ul{\ntst{f.\mbr{\IFunc}{\TFunc}}}}\;, \\
\set{\ptst{f.\mbr{\CFunc}{\IFunc}},
     \ul{\ntst{f.\mbr{\IFunc}{\IFunc}}}}\;, \\
\set{\ul{\ptst{f.\mbr{\CFunc}{\CFunc}}},
     \ntst{f.\mbr{\IFunc}{\CFunc}}}\;.
\end{ldispl}%
We have underlined one representative of each equivalence class in order
to refer to them easily in the proof of the following theorem.

\begin{theorem}
\label{theorem-min-meth-set}
\mbox{}
\begin{itemize}
\item[\textup{(1)}]
The set 
$\set{\mbr{\FFunc}{\FFunc},\mbr{\TFunc}{\TFunc},
      \mbr{\IFunc}{\IFunc},\mbr{\CFunc}{\CFunc},
      \mbr{\IFunc}{\FFunc},\mbr{\IFunc}{\TFunc},
      \mbr{\TFunc}{\IFunc},\mbr{\TFunc}{\CFunc}}$ 
is a minimal $M \subseteq \Methbr$ such that $\PIbr(M)$ contains at 
least one representative from each of the equivalence classes of 
$\PIbr(\Methbr)$ with respect to $\eeqv$.
\item[\textup{(2)}]
Each minimal $M \subseteq \Methbr$ such that $\PIbr(M)$ contains at 
least one representative from each of the equivalence classes of 
$\PIbr(\Methbr)$ with respect to $\eeqv$ consists of $8$ methods.
\end{itemize}
\end{theorem}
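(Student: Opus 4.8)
The plan is to recast the stated condition as a set-covering problem over the sixteen methods and then exploit a simple pairing structure. First I would observe that the sixteen equivalence classes listed just above the theorem recur uniformly for every focus $f \in \Foci$, and that an instruction $f.m$, $\ptst{f.m}$, or $\ntst{f.m}$ belongs to $\PIbr(M)$ precisely when $m \in M$; hence the requirement ``$\PIbr(M)$ meets every equivalence class'' is independent of the focus and reduces to a condition on $M$ alone. For each method $m = \mbr{p}{q} \in \Methbr$ I would then read off from the list the three classes into which its instructions $f.m$, $\ptst{f.m}$, $\ntst{f.m}$ fall, obtaining for each $m$ a set $\mathrm{Cov}(m)$ of at most three classes. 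The condition on $M$ becomes exactly that $\bigcup_{m \in M} \mathrm{Cov}(m)$ is the set of all sixteen classes.

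Second, I would record the regularities this bookkeeping exhibits. Grouping the classes by the second component $q$ of $\mbr{p}{q}$, the sixteen methods split into eight pairwise-disjoint pairs,
\[
\set{\mbr{\FFunc}{q},\mbr{\TFunc}{q}} \quad\mbox{and}\quad \set{\mbr{\IFunc}{q},\mbr{\CFunc}{q}}, \qquad q \in \set{\FFunc,\TFunc,\IFunc,\CFunc},
\]
with two key properties: (i) the two methods of each pair cover exactly the same set of classes; and (ii) each pair has a \emph{private} class, one covered by no method outside that pair --- namely the class of $\ptst{f.\mbr{\FFunc}{q}}$ for $\set{\mbr{\FFunc}{q},\mbr{\TFunc}{q}}$ and the class of $\ptst{f.\mbr{\IFunc}{q}}$ for $\set{\mbr{\IFunc}{q},\mbr{\CFunc}{q}}$. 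Since these eight private classes are pairwise distinct and the eight pairs are disjoint, property (ii) forces every covering $M$ to contain at least one method from each pair, so every covering $M$ has at least $8$ methods.

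Third, for part (1) I would check directly that the displayed eight-method set contains exactly one method from each of the eight pairs and that the union of the corresponding $\mathrm{Cov}$ sets is all sixteen classes; its minimality is then immediate, since by property (ii) deleting any of its eight methods leaves that pair's private class uncovered. For part (2) I would use property (i): if a covering $M$ contained both methods of some pair, then deleting one of them would again yield a covering set (the two methods cover the same classes), so $M$ would fail to be minimal; hence a minimal $M$ contains at most one method per pair. Together with the lower bound of at least one method per pair, a minimal $M$ contains exactly one method from each of the eight pairs and therefore consists of exactly $8$ methods.

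The routine but error-prone heart of the argument, and the step I would be most careful with, is the first one: correctly computing $\mathrm{Cov}(m)$ for all sixteen methods from the explicit list of equivalence classes, because the entire counting argument rests on properties (i) and (ii), which are exactly the regularities that this table of coverages must display. Once that bookkeeping is verified, the remaining reasoning is a short disjointness-and-counting argument.
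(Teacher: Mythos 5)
Your proposal is correct and follows essentially the same route as the paper's proof: both arguments rest on the observation that the eight two-element equivalence classes of the forms $\set{\ptst{f.\mbr{\FFunc}{q}},\ntst{f.\mbr{\TFunc}{q}}}$ and $\set{\ptst{f.\mbr{\IFunc}{q}},\ntst{f.\mbr{\CFunc}{q}}}$ involve pairwise disjoint method pairs partitioning $\Methbr$, forcing one method per pair, while the remaining classes come for free. Your property~(i) (both methods of a pair cover the same classes) just makes explicit the step, left implicit in the paper, that a minimal covering set cannot contain both members of a pair and hence has exactly $8$ elements.
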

\begin{proof}
By uniformity, it is sufficient to look at the equivalence classes of 
$\PIbr(\Methbr)$ for an arbitrary focus from $\Foci$.
\begin{itemize}
\item[(1)]
Let 
$M'= \set{\mbr{\FFunc}{\FFunc},\mbr{\TFunc}{\TFunc},
          \mbr{\IFunc}{\IFunc},\mbr{\CFunc}{\CFunc},
          \mbr{\IFunc}{\FFunc},\mbr{\IFunc}{\TFunc},
          \mbr{\TFunc}{\IFunc},\mbr{\TFunc}{\CFunc}}$.
Then the representatives of the different equivalence classes of 
$\PIbr(\Methbr)$ that are underlined above belong to the set
$\PIbr(M')$.
Moreover, each method from $M'$ occurs in a primitive instruction from 
$\PIbr(\Methbr)$ that belongs to an equivalence class of 
$\PIbr(\Methbr)$ that contains only one other primitive instruction, but 
the method that occurs in this other primitive instruction is not from
$M'$.
Hence $M'$ is minimal.
\item[(2)]
First, we consider the first and last four equivalence classes above.
Each of them consists of two primitive instructions.
Each method that occurs in the primitive instructions from one of them 
does not occur in the primitive instructions from another of them.
Consequently, exactly eight methods are needed for representatives from 
these equivalence classes.
Next, we consider the remaining eight equivalence classes.
For each of them, the methods that occur in the primitive instructions 
from it include the methods that occur in the primitive instructions 
from one of the equivalence classes that we considered first.
Consequently, no additional methods are needed for representatives from 
the remaining equivalence classes.
Hence, exactly eight methods are needed for representatives from all 
equivalence classes. \qed
\end{itemize}
\end{proof}
Theorem~\ref{theorem-min-meth-set} tells us that each primitive 
instruction from $\PIbr(\Methbr)$ has the same effects as one with a 
method from 
$\set{\mbr{\FFunc}{\FFunc},\mbr{\TFunc}{\TFunc},
      \mbr{\IFunc}{\IFunc},\mbr{\CFunc}{\CFunc},
      \mbr{\IFunc}{\FFunc},\mbr{\IFunc}{\TFunc},
      \mbr{\TFunc}{\IFunc},\mbr{\TFunc}{\CFunc}}$
and that there does not exist a smaller set with this 
property.
The methods that we used in~\cite{BM13a,BM14e} are included in this set.

We have the following corollary of the proof of part~(2) of
Theorem~\ref{theorem-min-meth-set}.
\begin{corollary}
\label{corollary-min-meth-set}
There exist $256$ minimal $M \subseteq \Methbr$ such that $\PIbr(M)$ 
contains at least one representative from each of the equivalence 
classes of $\PIbr(\Methbr)$ with respect to $\eeqv$.
\end{corollary}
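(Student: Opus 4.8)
The plan is to recast the defining property of the sets $M$ as a transversal (hitting-set) condition and then read off the count from the structure already exhibited in the proof of part~(2) of Theorem~\ref{theorem-min-meth-set}. By the uniformity remark at the start of that proof, it suffices to analyse the $16$ equivalence classes for a single focus $f \in \Foci$, since $M$ is a set of methods and therefore acts in the same way for every focus. For an equivalence class $C$, the set $\PIbr(M)$ contains a representative of $C$ exactly when $M$ contains a method occurring in one of the primitive instructions making up $C$; hence the requirement that $\PIbr(M)$ meet every class is equivalent to requiring that $M$ be a hitting set for the family of method sets of the $16$ classes, and the minimal $M$ of the corollary are exactly the inclusion-minimal hitting sets of this family.

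First I would record the eight two-element method sets coming from the first four and the last four equivalence classes: for each $q \in \set{\FFunc,\TFunc,\IFunc,\CFunc}$ these are $\set{\mbr{\FFunc}{q},\mbr{\TFunc}{q}}$ (from the first four) and $\set{\mbr{\IFunc}{q},\mbr{\CFunc}{q}}$ (from the last four). A glance at the displayed list shows that these eight sets are pairwise disjoint and together exhaust the $16$ methods of $\Methbr$. Next I would check that hitting these eight sets already forces a representative of every class: each of the four six-element classes has method set $\set{\mbr{\FFunc}{q},\mbr{\TFunc}{q}} \union \set{\mbr{\IFunc}{q},\mbr{\CFunc}{q}}$ for the appropriate $q$, and each of the four remaining two-element classes has method set exactly $\set{\mbr{\IFunc}{q},\mbr{\CFunc}{q}}$, so any $M$ meeting all eight sets meets these classes too. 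Conversely, meeting every class in particular forces meeting the first four and last four classes, i.e.\ all eight sets. Thus the property in the corollary is equivalent to: $M$ intersects each of the eight pairwise disjoint two-element sets.

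Finally, since those eight sets are pairwise disjoint, have two elements each, and cover $\Methbr$, an inclusion-minimal $M$ with this property is precisely a choice of one method from each of the eight sets. This yields $|M| = 8$, in agreement with Theorem~\ref{theorem-min-meth-set}, and $2^8 = 256$ possibilities in all. I expect the only point needing care to be the verification that the four six-element classes and the four ``duplicate'' two-element classes contribute no constraint beyond the eight disjoint sets --- that their method sets respectively contain and equal the sets already accounted for. This is a direct inspection of the displayed list of equivalence classes, and it is exactly the bookkeeping that already underlies the proof of part~(2) of Theorem~\ref{theorem-min-meth-set}.
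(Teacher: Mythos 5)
Your proposal is correct and takes essentially the same route as the paper, which gives no separate proof but derives the count directly from the proof of Theorem~\ref{theorem-min-meth-set}(2): the first and last four equivalence classes yield eight pairwise disjoint two-element method sets covering $\Methbr$, the remaining eight classes impose no further constraints, and a minimal $M$ is a transversal picking one method from each pair, giving $2^8 = 256$. Your explicit hitting-set formulation and the check on the six-element and duplicate two-element classes is precisely the bookkeeping the paper leaves implicit.
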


\section{Bounded Functional Completeness of Instruction Sets}
\label{sect-bounded-complete}

Not all methods from the minimal set mentioned in 
Theorem~\ref{theorem-min-meth-set} are needed to obtain the effects of 
each primitive instruction from $\PIbr(\Methbr)$ in the case where 
instruction sequences instead of instructions are used to obtain the 
effects. 
In this section, we look at the case where instruction sequences are 
used.  
We begin by defining the notion of $k$-size-bounded functional 
completeness.

Let $M \subseteq \Methbr$ and $k \in \Natpos$.  
Then the instruction set $\PIbr(M)$ is 
\emph{$k$-size-bounded functionally complete} if 
there exists a function $\funct{\psi}{\PIbr(\Methbr)}{\ISbr(M)}$ such 
that, for each $u \in \PIbr(\Methbr)$, $\len(\psi(u)) \leq k$ and
there exists an $f \in \Foci$ such that, 
for each $b \in \Bool$ and $n \in \Nat$:
\begin{center}
\lststretch
\begin{tabular}[t]{@{}l@{}}
$\extr{u \conc \halt^n} \sfause f.\BR^\Methbr_b =
 \extr{\psi(u) \conc \halt^n} \sfause f.\BR^\Methbr_b$,
\\
$\extr{u \conc \halt^n} \sfapply f.\BR^\Methbr_b \hsp{.2} {} =
 \extr{\psi(u) \conc \halt^n} \sfapply f.\BR^\Methbr_b$. 
\end{tabular}
\end{center}
$\PIbr(M)$ is called 
\emph{strictly $k$-size-bounded functionally complete} if $\PIbr(M)$ is 
$k$-size-bounded functionally complete and there does not exist a 
$k' < k$ such that $\PIbr(M)$ is $k'$-size-bounded functionally 
complete.

The following proposition illustrates the relevance of the notion of 
$k$-size-bounded functionally completeness.
\begin{proposition}
\label{proposition-bounded-complete}
Let $M \subseteq \Methbr$ and $k \in \Natpos$.
Let $\funct{\psi}{\PIbr(\Methbr)}{\ISbr(M)}$ be as in the definition of
$k$-size-bounded functional completeness given above. 
Let $\funct{\psi'}{\ISbr(\Methbr)}{\ISbr(M)}$ be such that
$\psi'(u_1 \conc \ldots \conc u_n) = u_1' \conc \ldots \conc u_n'$, 
where
\begin{center}
\lststretch
\begin{tabular}[t]{@{}l@{\,\,}l@{}}
$u_i' \equiv \halt$ & if $u_i \equiv \halt$;
\\
$u_i' \equiv \fjmp{l'}$ \\ \,\,\,\, with 
$l' = 
l + \sum_{j \in \set{i,\ldots,i+l-1}
          \mathrm{\,s.t.\,} u_j \in \PIbr(\Methbr)}
     (\len(\psi(u_j)) - 1)$ & if $u_i \equiv \fjmp{l}$;
\\
$u_i' \equiv \psi(u_i)$ & otherwise.
\end{tabular}
\end{center}
Assume that $\psi$ restricted to $\ISbr(M)$ is the identity function on 
$\ISbr(M)$.
Then, for each $X \in \ISbr(\Methbr)$, $\psi'(X) \feqv X$ and 
$\len(\psi'(X)) \leq \len(X) + (k - 1) \mul p$, where $p$ is the number 
of occurrences of primitive instructions from 
$\PIbr(\Methbr) \diff \PIbr(M)$ in $X$.
\end{proposition}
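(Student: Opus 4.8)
The plan is to establish the two claims separately: the length bound by a direct count over the positions of $X$, and the functional equivalence by an induction on $\len(X)$ patterned after the proof of Proposition~\ref{proposition-eff-equiv}.

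For the length bound, write $X \equiv u_1 \conc \ldots \conc u_n$, so that $\len(X) = n$ and $\len(\psi'(X)) = \sum_{i=1}^{n} \len(u_i')$. I would note that $\len(u_i') = 1$ whenever $u_i \equiv \halt$ or $u_i \equiv \fjmp{l}$, since re-targeting a forward jump keeps it a single primitive instruction, and that otherwise $\len(u_i') = \len(\psi(u_i))$. By the assumption that $\psi$ is the identity on $\ISbr(M)$, this last quantity equals $1$ when $u_i \in \PIbr(M)$ and is at most $k$ when $u_i \in \PIbr(\Methbr) \diff \PIbr(M)$. Hence $\len(\psi'(X)) = n + \sum_{i=1}^{n}(\len(u_i') - 1)$, in which every summand vanishes except at the $p$ positions carrying an instruction from $\PIbr(\Methbr) \diff \PIbr(M)$, each contributing at most $k-1$. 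This gives $\len(\psi'(X)) \leq \len(X) + (k-1) \mul p$.

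For the functional equivalence, following Proposition~\ref{proposition-eff-equiv} I would prove the stronger statement that, for all $l, n \in \Nat$, $\fjmp{l} \conc X \conc \halt^n \feqv \fjmp{l'} \conc \psi'(X) \conc \halt^n$, where $l'$ is obtained from $l$ by the same correction as in the definition of $\psi'$; the functional equivalence $\psi'(X) \feqv X$ then follows by taking $l = 1$ and $n = 0$, for which $l' = 1$ and $\extr{\fjmp{1} \conc Z} = \extr{Z}$. Carrying a leading jump is what makes the induction on $\len(X)$ work, since it records the positional shift produced by replacing earlier instructions by longer blocks. The induction step is a case distinction on the first instruction $u_1$. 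If $u_1 \equiv \halt$, both sides terminate immediately. If $u_1 \equiv \fjmp{l_1}$, the correction in the re-targeting formula is set up precisely so that the shifted offset lands execution at the start of the block replacing the original target, and the step follows from the axioms of \PGA\ and the thread-extraction axioms for jumps. If $u_1$ is a basic or test instruction, it is replaced by the block $\psi(u_1)$, and I would use the defining property of $\psi$ --- that $\psi(u_1) \conc \halt^n$ and $u_1 \conc \halt^n$ have the same effects on $\BR^\Methbr_b$ for all $b$ and $n$ --- to match the behaviour of the block against that of $u_1$, before appealing to the induction hypothesis for the remaining instructions.

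The main obstacle lies in this last case: the defining property of $\psi$ guarantees only that each block behaves correctly when followed by halt instructions, whereas in $\psi'(X)$ the block is followed by the genuine continuation $\psi'(u_2 \conc \ldots \conc u_n)$. For a basic instruction this causes no difficulty, because its block always proceeds to the immediately following instruction, which is the start of the next block. The delicate point is a test instruction: in $X$ its skip exit passes over exactly one instruction, whereas in $\psi'(X)$ the corresponding exit must pass over the entire block that replaced that instruction. Making the induction go through therefore requires showing that execution of $\psi'(X)$ visits exactly the block boundaries corresponding to the instruction positions of $X$ --- that is, that both the re-targeted jump offsets and the proceed and skip exits of the substituted blocks land on block boundaries --- so that the register contents and the final termination-or-deadlock verdict agree for every initial assignment. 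Verifying this alignment, which is where the precise shape of the re-targeting formula and of the images $\psi(u)$ must be exploited, is the crux of the argument.
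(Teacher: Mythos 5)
Your proposal matches the paper's own proof in both halves: the length bound is obtained by the same position\mbox{-}by\mbox{-}position count (the $p$ positions outside $\PIbr(M)$ each contributing at most $k-1$ extra), and the functional equivalence by induction on $\len(X)$ with a leading forward jump carried through the induction, exactly as in the proof of Proposition~\ref{proposition-eff-equiv}. The two small divergences are in your favour rather than gaps: you re-target the leading jump on the $\psi'(X)$ side (the paper's sketch keeps the same $l$ on both sides, which as stated misaligns for $l \geq 2$ once blocks of length greater than one appear), and you name but do not discharge the block-boundary alignment --- including the point that the ``for each $n \in \Nat$'' clause in the definition of $\psi$ is what pins the exits of each block $\psi(u)$ to positions $1$ and $2$ after the block --- which is precisely the step the paper compresses into ``easily proved by induction''.
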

\begin{proof}
It is easily proved by induction on the length of $X$ that, for each 
$l,n \in \Nat$, 
$\fjmp{l} \conc \psi'(X) \conc \halt^n \feqv
 \fjmp{l} \conc X \conc \halt^n$.
From this, it follows immediately that $\psi'(X) \feqv X$.

Suppose that $X = u_1 \conc \ldots \conc u_n$.
Let $p$ be the number of occurrences of primitive instructions from 
$\PIbr(\Methbr) \diff \PIbr(M)$ in $X$.
Then 
\begin{flushleft}
\lststretch
\begin{tabular}[t]{@{\hsp{5}}l@{}}
$\sum_{i \in \set{1,\ldots,n}
       \mathrm{\,s.t.\,} u_i \notin \PIbr(\Methbr) \diff \PIbr(M)}
  \len(u_i) \phantom{\psi()} = 
 \len(X) - p\;$;
\\
$\sum_{i \in \set{1,\ldots,n} 
       \mathrm{\,s.t.\,} u_i \in \PIbr(\Methbr) \diff \PIbr(M)}
  \len(\psi(u_i))  \leq
 k \mul p\;$.
\end{tabular}
\end{flushleft}
Hence,
$\len(\psi'(X)) \leq \len(X) - p + k \mul p = \len(X) + (k - 1) \mul p$.
\qed
\end{proof}

We have the following corollary of part~(1) of 
Theorem~\ref{theorem-min-meth-set} and the definition of 
$k$-size-bounded functional completeness.
\begin{corollary}
\label{corollary-bounded-complete-1}
$\PIbr(\set{\mbr{\FFunc}{\FFunc},\mbr{\TFunc}{\TFunc},
            \mbr{\IFunc}{\IFunc},\mbr{\CFunc}{\CFunc},
            \mbr{\IFunc}{\FFunc},\mbr{\IFunc}{\TFunc},
            \mbr{\TFunc}{\IFunc},\mbr{\TFunc}{\CFunc}})$ 
is $1$-size-bounded \sloppy functionally complete.
\end{corollary}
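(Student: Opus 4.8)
The plan is to build the required function $\funct{\psi}{\PIbr(\Methbr)}{\ISbr(M')}$ directly from the equivalence-class structure, where $M'$ denotes the eight-method set from part~(1) of Theorem~\ref{theorem-min-meth-set}. By that part, $\PIbr(M')$ contains at least one representative of every equivalence class of $\PIbr(\Methbr)$ with respect to $\eeqv$. Hence, for each $u \in \PIbr(\Methbr)$, I fix a representative lying in its own $\eeqv$-class and in $\PIbr(M')$, and define $\psi(u)$ to be it. Since $\psi(u)$ is a single primitive instruction from $\PIbr(M')$, it is an element of $\ISbr(M')$ with $\len(\psi(u)) = 1$, and by construction $u \eeqv \psi(u)$. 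As the equivalence classes are parametrised per focus, the representative shares the focus $f$ of $u$, and this $f$ witnesses $u \eeqv \psi(u)$.

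It remains to upgrade $u \eeqv \psi(u)$, which only supplies the $\sfause$- and $\sfapply$-equations for $n \in \set{1,2}$, to the same equations for all $n \in \Nat$ required by the definition of $1$-size-bounded functional completeness. This reconciliation of the quantifier ranges for $n$ is the only real point of the argument, and I would dispatch it with the two boundary observations below, reusing the same witnessing focus $f$.

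For $n = 0$, I would use $u \conc \halt^0 \equiv u$ together with the thread-extraction axioms, which give $\extr{u} = (f.m) \bapf \DeadEnd$ for every $u \in \PIbr(\Methbr)$ with underlying basic action $f.m$, whether $u$ is plain, a positive test, or a negative test. Since $m \in \Methbr$ is always processed by $\BR^\Methbr_b$ with a reply in $\set{\False,\True}$, axioms AU4/AU5 followed by AU2 yield $\extr{u} \sfause f.\BR^\Methbr_b = \DeadEnd$, and A4/A5 followed by A2 yield $\extr{u} \sfapply f.\BR^\Methbr_b = \emptysf$. As these values do not depend on $u$, the $n = 0$ equations hold for $u$ and $\psi(u)$ trivially. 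For $n \geq 2$, I would show that $\extr{u \conc \halt^n}$ stabilises, i.e.\ equals $\extr{u \conc \halt^2}$ for every $u \in \PIbr(\Methbr)$: a short computation with the thread-extraction axioms gives, for $n \geq 2$, $\extr{a \conc \halt^n} = a \bapf \Stop$, $\extr{\ptst{a} \conc \halt^n} = a \bapf \Stop$, and $\extr{\ntst{a} \conc \halt^n} = a \bapf \Stop$, independently of $n$, since the $\halt$s beyond the second are never reached. Consequently the $\sfause$- and $\sfapply$-values for any $n \geq 3$ coincide with those for $n = 2$.

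Combining the $n = 0$ case, the $n \in \set{1,2}$ case coming from $u \eeqv \psi(u)$, and the stabilisation for $n \geq 3$, the defining equations of $1$-size-bounded functional completeness hold for all $n \in \Nat$, so $\psi$ witnesses the claim. I expect the main obstacle to be exactly this extension of the $n$-range from $\set{1,2}$ to $\Nat$; everything else is immediate from Theorem~\ref{theorem-min-meth-set} and the definitions.
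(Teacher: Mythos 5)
Your proposal is correct and follows exactly the route the paper intends: the corollary is stated as an immediate consequence of Theorem~\ref{theorem-min-meth-set}(1), i.e.\ map each $u$ to the $\eeqv$-equivalent representative in $\PIbr(M')$ with the same focus, which is a length-$1$ instruction sequence. The only addition is your explicit reconciliation of the quantifier range on $n$ (from $\set{1,2}$ in the definition of $\eeqv$ to all of $\Nat$ in the definition of size-bounded functional completeness), a detail the paper silently elides; your $n=0$ and $n\geq 3$ arguments for it are sound.
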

The following theorem concerns the $k$-size-bounded functional 
completeness of a few subsets of this $1$-size-bounded functionally 
complete instruction set, including the ones that we used 
in~\cite{BM13a,BM14e}.
\begin{theorem}
\label{theorem-bounded-complete}
\mbox{} 
\begin{flushleft}
\lststretch
\begin{tabular}[t]{@{}l@{\hsp{.3}}l@{\hsp{.15}}c@{\hsp{.2}}r@{}}
\textup{(1)} &
$\PIbr(\set{\mbr{\FFunc}{\FFunc},\mbr{\TFunc}{\TFunc},
            \mbr{\IFunc}{\IFunc},\mbr{\CFunc}{\CFunc},
            \mbr{\IFunc}{\FFunc},\mbr{\IFunc}{\TFunc}})$
& is & strictly $2$-size-bounded funct.\ compl.
\\
\textup{(2)} &
$\PIbr(\set{\mbr{\FFunc}{\FFunc},\mbr{\TFunc}{\TFunc},
            \mbr{\IFunc}{\IFunc},\mbr{\CFunc}{\CFunc}})$
& is & strictly $3$-size-bounded funct.\ compl.
\\
\textup{(3)} &
$\PIbr(\set{\mbr{\FFunc}{\FFunc},\mbr{\TFunc}{\TFunc},
            \mbr{\IFunc}{\IFunc}})$
& is & strictly $4$-size-bounded funct.\ compl.
\\
\textup{(4)} &
$\PIbr(\set{\mbr{\CFunc}{\CFunc}})$
& is & strictly $3$-size-bounded funct.\ compl.
\\
\textup{(5)} &
$\PIbr(\set{\mbr{\IFunc}{\FFunc},\mbr{\IFunc}{\TFunc}})$
& is & strictly $4$-size-bounded funct.\ compl.
\end{tabular}
\end{flushleft}
\end{theorem}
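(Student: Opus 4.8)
The plan is to reduce everything to the sixteen effectual-equivalence classes of $\PIbr(\Methbr)$ that are displayed (with underlined representatives) just before Theorem~\ref{theorem-min-meth-set}. By Proposition~\ref{proposition-eff-equiv} and the definition of $k$-size-bounded functional completeness, it suffices, for a given $M$, to produce for one representative $u$ of each class an instruction sequence over $\BIbr(M)$ of the required length and to verify the two defining equations ($\sfause$ and $\sfapply$) for it; the effect of every primitive instruction is then obtained from the appropriate member of its class. Throughout I would exploit the $0\leftrightarrow 1$ relabelling, which sends $\mbr{p}{q}$ to $\mbr{\bar p}{\bar q}$ with $\overline{\FFunc}=\TFunc$, $\overline{\TFunc}=\FFunc$ and $\IFunc,\CFunc$ fixed: each of the five instruction sets is invariant under it, so it pairs up the classes and roughly halves the casework.

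For the \textbf{upper bounds} I would first sort the sixteen classes, for each $M$, into those realized by a single element of $\PIbr(M)$ (a plain, positive-test or negative-test instruction) or by a single forward jump — recall that forward jumps leave the content untouched and so handle the content-preserving classes — and the few remaining ones. For the remaining classes I would assemble short sequences from a small stock of gadgets: a content-preserving read $\ptst{f.\mbr{\IFunc}{\IFunc}}$ (or, in part~(4), its complement-based substitute $\ptst{f.\mbr{\CFunc}{\CFunc}}$), the ``set-and-proceed'' and ``set-and-skip'' instructions obtained as $\ntst{f.\mbr{\FFunc}{\FFunc}}$, $\ptst{f.\mbr{\FFunc}{\FFunc}}$ and their $\TFunc$- and $\CFunc$-variants, and forward jumps for routing. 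For the sets that contain $\mbr{\CFunc}{\CFunc}$ (parts~(1),(2),(4)) a single complement supplies the $\CFunc$-content classes cheaply; for the complement-free sets (parts~(3),(5)) a $\CFunc$-content effect must be built by reading the content and then forcing the opposite constant in each branch, which is what pushes the bound up to $4$. In each case I would write the sequence out explicitly and check the equations for all $b\in\Bool$ and the small values of $n$ that actually matter.

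For \textbf{strictness} I would, for each set, isolate one class that cannot be realized below the claimed bound and prove impossibility by a structural analysis of the candidate sequences. The natural witnesses to try are: for part~(1), a $\CFunc$-content class, since no single instruction of $\PIbr(M)$ flips the content; for parts~(2) and~(4), a class combining a constant content-transformation ($\FFunc$ or $\TFunc$) with a content-reading control reply ($\IFunc$ or $\CFunc$); and for parts~(3) and~(5), a $\CFunc$-content class with a content-reading control reply. The argument rests on two invariants tracked as functions of the initial content $b$: a forward jump never changes the content, and the methods available in $M$ restrict which content-function and which reply can be produced simultaneously by a single step. Propagating these through a hypothetical sequence that is one instruction too short, together with the observation that the two control exits demanded by a content-reading reply must remain separated while the content is still forced to a prescribed value, should yield the required contradiction.

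The step I expect to be the main obstacle is the lower bound in parts~(3) and~(5), where every length-$3$ sequence over a complement-free set must be ruled out for the witness $\CFunc$-content effect. Here the case analysis is genuinely delicate: a test instruction branches and the two branches may later reconverge, so I would have to enumerate the reachable pairs ``(content as a function of $b$, control exit)'' carefully rather than rely on a crude counting bound. The $0\leftrightarrow1$ symmetry and the content invariant are what I would use to keep this enumeration finite and manageable. The upper bounds, by contrast, I expect to be routine once the gadget stock above is in place.
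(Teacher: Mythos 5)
Your overall strategy --- reduce to the sixteen effectual-equivalence classes, build explicit short sequences for the upper bounds, and establish strictness by exhibiting one witness class that cannot be realized below the bound --- is exactly the paper's, and your upper-bound gadgets and your ``content as a function of $b$ plus control exit'' invariant are the right tools. The gap is in your choice of witnesses for the lower bounds. For part~(1) your stated reason is false: $\mbr{\CFunc}{\CFunc}\in M$ there, so the single instruction $f.\mbr{\CFunc}{\CFunc}$ does flip the content; the correct witness is not ``a $\CFunc$-content class'' in general but specifically the class $\set{\ptst{f.\mbr{\FFunc}{\CFunc}},\ntst{f.\mbr{\TFunc}{\CFunc}}}$, whose effect is to complement the content \emph{and} unconditionally skip the next instruction --- no jump changes the content, no plain instruction skips, and the only test instructions in $\PIbr(M)$ that skip unconditionally have constant-reply methods $\mbr{\FFunc}{q}$ or $\mbr{\TFunc}{q}$ with $q\neq\CFunc$ for the $M$ of part~(1).

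More seriously, for parts~(3) and~(5) --- which you correctly identify as the hard cases --- your proposed witness ``a $\CFunc$-content class with a content-reading control reply'' (i.e.\ the classes of $\ptst{f.\mbr{\CFunc}{\CFunc}}$ and $\ntst{f.\mbr{\CFunc}{\CFunc}}$) is realizable in length~$3$ over both sets, e.g.\ $\psi(\ptst{f.\mbr{\CFunc}{\CFunc}})=\ntst{f.\mbr{\IFunc}{\IFunc}}\conc\ntst{f.\mbr{\TFunc}{\TFunc}}\conc\ptst{f.\mbr{\FFunc}{\FFunc}}$ for part~(3) and $\psi(\ptst{f.\mbr{\CFunc}{\CFunc}})=\ntst{f.\mbr{\IFunc}{\FFunc}}\conc\ptst{f.\mbr{\IFunc}{\TFunc}}\conc\ptst{f.\mbr{\IFunc}{\FFunc}}$ for part~(5), so the delicate enumeration you plan would terminate without a contradiction and prove nothing. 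The unique class forcing length~$4$ in both parts is again $\ntst{f.\mbr{\TFunc}{\CFunc}}$ (complement and always skip): over a complement-free $M$ you need three instructions just to invert the content in both branches, and a fourth to route both branches to the ``skip'' exit (the paper realizes it as, e.g., $\ptst{f.\mbr{\IFunc}{\IFunc}}\conc\ptst{f.\mbr{\FFunc}{\FFunc}}\conc f.\mbr{\TFunc}{\TFunc}\conc\fjmp{2}$ for part~(3)). With that corrected witness your invariant-propagation argument is the right way to rule out all length-$3$ candidates; as written, the plan attacks the wrong class and the strictness claims in parts~(1), (3) and~(5) do not follow.
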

\begin{proof}
\sloppy
We assume that, for each $M \subseteq \Methbr$ and 
$k \in \Natpos$, the restriction to $\PIbr(M)$ of a function $\psi$ that 
witnesses $k$-size-bounded functional completeness of $\PIbr(M)$ is the
identify function on $\PIbr(M)$ and the restriction to 
$\PIbr(\Methbr) \diff \PIbr(M)$ has the same instruction sequence from 
$\ISbr(M)$ as value for primitive instruction from the same equivalence 
class of $\PIbr(\Methbr)$. 
It is clear that this assumption can be made without loss of generality.

Below, for each individual part of the theorem, first a function $\psi$ 
that witnesses the stated size-bounded functional completeness is 
uniquely characterized by giving the instruction sequences for the 
primitive instructions for Boolean registers that are not covered by the 
assumption and then the strictness of the stated size-bounded functional 
completeness is established by checking for one of the primitive 
instructions concerned for which the given instruction sequence was of 
the greatest length that the given instruction sequence cannot be 
replaced by a shorter one.     

We say that a $u \in \PIbr(\Methbr)$ cannot be replaced by a jump 
instruction if there exists a $b \in \Bool$ and $n \in \Natpos$ such 
that, for each $v \in \set{\fjmp{l} \where l \in \Nat}$,
$\extr{u \conc \halt^n} \sfause f.\BR^\Methbr_b \neq
 \extr{v \conc \halt^n} \sfause f.\BR^\Methbr_b$.
\begin{itemize}
\item[(1)]
Let 
$M = \set{\mbr{\FFunc}{\FFunc},\mbr{\TFunc}{\TFunc},
          \mbr{\IFunc}{\IFunc},\mbr{\CFunc}{\CFunc},
          \mbr{\IFunc}{\FFunc},\mbr{\IFunc}{\TFunc}}$.
Then instruction sequences from $\ISbr(M)$ are needed for 
$\ntst{f.\mbr{\TFunc}{\IFunc}}$ and $\ntst{f.\mbr{\TFunc}{\CFunc}}$.
Take $\psi$ such that
\begin{flushleft}
\lststretch
\begin{tabular}[t]{@{\hsp{5}}l@{\hsp{.7}}l@{}}
\textup{(a)} &
$\psi(\ntst{f.\mbr{\TFunc}{\IFunc}}) = \fjmp{2}$, 
\\
\textup{(b)} &
$\psi(\ntst{f.\mbr{\TFunc}{\CFunc}}) =
 f.\mbr{\CFunc}{\CFunc} \conc \fjmp{2}$.
\end{tabular}
\end{flushleft}
Then $\psi$ witnesses the $2$-size-bounded functional completeness of 
$\PIbr(M)$.
Because $\ntst{f.\mbr{\TFunc}{\CFunc}}$ cannot be replaced by a jump
instruction and there exists no $u \in \PIbr(M)$ such that 
$u \eeqv \ntst{f.\mbr{\TFunc}{\CFunc}}$, $\PIbr(M)$ is not 
$1$-size-bounded functionally complete.
Hence, $\PIbr(M)$ is strictly $2$-size-bounded functionally complete.
\item[(2)]
Let 
$M = \set{\mbr{\FFunc}{\FFunc},\mbr{\TFunc}{\TFunc},
          \mbr{\IFunc}{\IFunc},\mbr{\CFunc}{\CFunc}}$.
Then instruction sequences from $\ISbr(M)$ are needed for 
$\ntst{f.\mbr{\TFunc}{\IFunc}}$, $\ntst{f.\mbr{\TFunc}{\CFunc}}$,
$\ptst{f.\mbr{\IFunc}{\FFunc}}$, $\ntst{f.\mbr{\IFunc}{\FFunc}}$,
$\ptst{f.\mbr{\IFunc}{\TFunc}}$, and $\ntst{f.\mbr{\IFunc}{\TFunc}}$.
Take $\psi$ such that (a), (b), and
\begin{flushleft}
\lststretch
\begin{tabular}[t]{@{\hsp{5}}l@{\hsp{.7}}l@{}}
\textup{(c1)} &
$\psi(\ptst{f.\mbr{\IFunc}{\FFunc}}) =
 \ptst{f.\mbr{\IFunc}{\IFunc}} \conc \ptst{f.\mbr{\FFunc}{\FFunc}} \conc
 \ptst{f.\mbr{\FFunc}{\FFunc}}$ or \\  
\textup{(c2)} &
$\psi(\ptst{f.\mbr{\IFunc}{\FFunc}}) =
 \ntst{f.\mbr{\CFunc}{\CFunc}} \conc \ptst{f.\mbr{\FFunc}{\FFunc}} \conc
 \ptst{f.\mbr{\FFunc}{\FFunc}}$ or \\  
\textup{(c3)} &
$\psi(\ptst{f.\mbr{\IFunc}{\FFunc}}) =
 \ptst{f.\mbr{\IFunc}{\IFunc}} \conc \ptst{f.\mbr{\CFunc}{\CFunc}} \conc
 \fjmp{2}$ or \\  
\textup{(c4)} &
$\psi(\ptst{f.\mbr{\IFunc}{\FFunc}}) =
 \ntst{f.\mbr{\CFunc}{\CFunc}} \conc \fjmp{2} \conc
 \ptst{f.\mbr{\CFunc}{\CFunc}}$,
\\
\textup{(d1)} &
$\psi(\ntst{f.\mbr{\IFunc}{\FFunc}}) =
 \ntst{f.\mbr{\IFunc}{\IFunc}} \conc \ptst{f.\mbr{\FFunc}{\FFunc}} \conc
 \ptst{f.\mbr{\FFunc}{\FFunc}}$ or \\  
\textup{(d2)} &
$\psi(\ntst{f.\mbr{\IFunc}{\FFunc}}) =
 \ptst{f.\mbr{\CFunc}{\CFunc}} \conc \ptst{f.\mbr{\FFunc}{\FFunc}} \conc
 \ptst{f.\mbr{\FFunc}{\FFunc}}$ or \\  
\textup{(d3)} &
$\psi(\ntst{f.\mbr{\IFunc}{\FFunc}}) =
 \ntst{f.\mbr{\IFunc}{\IFunc}} \conc \fjmp{2} \conc
 \ptst{f.\mbr{\CFunc}{\CFunc}}$ or \\  
\textup{(d4)} &
$\psi(\ntst{f.\mbr{\IFunc}{\FFunc}}) =
 \ptst{f.\mbr{\CFunc}{\CFunc}} \conc \ptst{f.\mbr{\CFunc}{\CFunc}} \conc
 \fjmp{2}$,
\\
\textup{(e1)} &
$\psi(\ptst{f.\mbr{\IFunc}{\TFunc}}) =
 \ptst{f.\mbr{\IFunc}{\IFunc}} \conc \ntst{f.\mbr{\TFunc}{\TFunc}} \conc
 \ntst{f.\mbr{\TFunc}{\TFunc}}$ or \\  
\textup{(e2)} &
$\psi(\ptst{f.\mbr{\IFunc}{\TFunc}}) =
 \ntst{f.\mbr{\CFunc}{\CFunc}} \conc \ntst{f.\mbr{\TFunc}{\TFunc}} \conc
 \ntst{f.\mbr{\TFunc}{\TFunc}}$ or \\  
\textup{(e3)} &
$\psi(\ptst{f.\mbr{\IFunc}{\TFunc}}) =
 \ptst{f.\mbr{\IFunc}{\IFunc}} \conc \fjmp{2} \conc
 \ntst{f.\mbr{\CFunc}{\CFunc}}$ or \\  
\textup{(e4)} &
$\psi(\ptst{f.\mbr{\IFunc}{\TFunc}}) =
 \ntst{f.\mbr{\CFunc}{\CFunc}} \conc \ntst{f.\mbr{\CFunc}{\CFunc}} \conc
 \fjmp{2}$,
\end{tabular}
\end{flushleft}
\begin{flushleft}
\lststretch
\begin{tabular}[t]{@{\hsp{5}}l@{\hsp{.7}}l@{}}
\textup{(f1)} &
$\psi(\ntst{f.\mbr{\IFunc}{\TFunc}}) =
 \ntst{f.\mbr{\IFunc}{\IFunc}} \conc \ntst{f.\mbr{\TFunc}{\TFunc}} \conc
 \ntst{f.\mbr{\TFunc}{\TFunc}}$ or \\  
\textup{(f2)} &
$\psi(\ntst{f.\mbr{\IFunc}{\TFunc}}) =
 \ptst{f.\mbr{\CFunc}{\CFunc}} \conc \ntst{f.\mbr{\TFunc}{\TFunc}} \conc
 \ntst{f.\mbr{\TFunc}{\TFunc}}$ or \\  
\textup{(f3)} &
$\psi(\ntst{f.\mbr{\IFunc}{\TFunc}}) =
 \ntst{f.\mbr{\IFunc}{\IFunc}} \conc \ntst{f.\mbr{\CFunc}{\CFunc}} \conc
 \fjmp{2}$ or \\  
\textup{(f4)} &
$\psi(\ntst{f.\mbr{\IFunc}{\TFunc}}) =
 \ptst{f.\mbr{\CFunc}{\CFunc}} \conc \fjmp{2} \conc
 \ntst{f.\mbr{\CFunc}{\CFunc}}$.\footnotemark
\end{tabular}
\footnotetext
{For several instruction sequences that start with a test instruction, 
 there is a counterpart with the same methods in the same numbers that 
 starts with the opposite test instruction.
 We refrain from mentioning these counterparts as alternatives.}
\end{flushleft}
Then $\psi$ witnesses the $3$-size-bounded functional completeness of 
$\PIbr(M)$. 
To obtain the effects of $\ptst{f.\mbr{\IFunc}{\FFunc}}$, an instruction 
sequence from $\ISbr(M)$ is needed that contains a test instruction from 
$\PIbr(M)$ with $\mbr{\IFunc}{\IFunc}$ or $\mbr{\CFunc}{\CFunc}$ as 
method and a primitive instruction from $\PIbr(M)$ with 
$\mbr{\FFunc}{\FFunc}$ or $\mbr{\CFunc}{\CFunc}$ as method.
Because there does not exist such an instruction sequence of length $2$
with the right effects, $\PIbr(M)$ is not $2$-size-bounded functionally 
complete.
Hence, $\PIbr(M)$ is strictly $3$-size-bounded functionally complete.
\item[(3)]
Let 
$M = \set{\mbr{\FFunc}{\FFunc},\mbr{\TFunc}{\TFunc},
          \mbr{\IFunc}{\IFunc}}$.
Then instruction sequences from $\ISbr(M)$ are needed for 
$\ntst{f.\mbr{\TFunc}{\IFunc}}$, $\ptst{f.\mbr{\IFunc}{\FFunc}}$, 
$\ntst{f.\mbr{\IFunc}{\FFunc}}$, $\ptst{f.\mbr{\IFunc}{\TFunc}}$, 
$\ntst{f.\mbr{\IFunc}{\TFunc}}$, $f.\mbr{\CFunc}{\CFunc}$, 
$\ptst{f.\mbr{\CFunc}{\CFunc}}$, $\ntst{f.\mbr{\CFunc}{\CFunc}}$, 
and $\ntst{f.\mbr{\TFunc}{\CFunc}}$.
Take $\psi$ such that (a), (c1) or (c3), (d1) or (d3), (e1) or (e3), 
(f1) or (f3), and
\begin{flushleft}
\lststretch
\begin{tabular}[t]{@{\hsp{5}}l@{\hsp{.7}}l@{}}
\textup{(g)} &
$\psi(f.\mbr{\CFunc}{\CFunc}) \phantom{+} =
 \ptst{f.\mbr{\IFunc}{\IFunc}} \conc \ptst{f.\mbr{\FFunc}{\FFunc}} \conc
 f.\mbr{\TFunc}{\TFunc}$, 
\\  
\textup{(h)} &
$\psi(\ptst{f.\mbr{\CFunc}{\CFunc}}) =
 \ntst{f.\mbr{\IFunc}{\IFunc}} \conc \ntst{f.\mbr{\TFunc}{\TFunc}} \conc
 \ptst{f.\mbr{\FFunc}{\FFunc}}$, 
\\  
\textup{(i)} &
$\psi(\ntst{f.\mbr{\CFunc}{\CFunc}}) =
 \ptst{f.\mbr{\IFunc}{\IFunc}} \conc \ptst{f.\mbr{\FFunc}{\FFunc}} \conc
 \ntst{f.\mbr{\TFunc}{\TFunc}}$,
\\
\textup{(j)} &
$\psi(\ntst{f.\mbr{\TFunc}{\CFunc}}) =
 \ptst{f.\mbr{\IFunc}{\IFunc}} \conc \ptst{f.\mbr{\FFunc}{\FFunc}} \conc
 f.\mbr{\TFunc}{\TFunc} \conc \fjmp{2}$.
\end{tabular}
\end{flushleft}
Then $\psi$ witnesses the $4$-size-bounded functional completeness of 
$\PIbr(M)$.
To obtain the effects of $\ntst{f.\mbr{\TFunc}{\CFunc}}$, an instruction 
sequence from $\ISbr(M)$ is needed that contains a test instruction from
$\PIbr(M)$ with $\mbr{\IFunc}{\IFunc}$ as method, a primitive 
instruction from $\PIbr(M)$ with $\mbr{\FFunc}{\FFunc}$ as method, and a 
primitive instruction from $\PIbr(M)$ with $\mbr{\TFunc}{\TFunc}$ as 
method.
Because there does not exist such an instruction sequence of length $3$
with the right effects, $\PIbr(M)$ is not $3$-size-bounded functionally 
complete.
Hence, $\PIbr(M)$ is strictly $4$-size-bounded functionally complete.
\item[(4)]
Let 
$M = \set{\mbr{\CFunc}{\CFunc}}$.
Then instruction sequences from $\ISbr(M)$ are needed for 
$\ntst{f.\mbr{\TFunc}{\IFunc}}$, $\ntst{f.\mbr{\TFunc}{\CFunc}}$,
$\ptst{f.\mbr{\IFunc}{\FFunc}}$, $\ntst{f.\mbr{\IFunc}{\FFunc}}$,
$\ptst{f.\mbr{\IFunc}{\TFunc}}$, $\ntst{f.\mbr{\IFunc}{\TFunc}}$,
$f.\mbr{\FFunc}{\FFunc}$, $\ptst{f.\mbr{\FFunc}{\FFunc}}$, 
$f.\mbr{\TFunc}{\TFunc}$, $\ntst{f.\mbr{\TFunc}{\TFunc}}$,
$f.\mbr{\IFunc}{\IFunc}$, $\ptst{f.\mbr{\IFunc}{\IFunc}}$, and
$\ntst{f.\mbr{\IFunc}{\IFunc}}$. 
Take $\psi$ such that (a), (b), (c4), (d4), (e4), (f4), and
\begin{flushleft}
\lststretch
\begin{tabular}[t]{@{\hsp{5}}l@{\hsp{.7}}l@{}}
\textup{(k)} &
$\psi(f.\mbr{\FFunc}{\FFunc}) \phantom{+} =
 \ptst{f.\mbr{\CFunc}{\CFunc}} \conc f.\mbr{\CFunc}{\CFunc}$, 
\\  
\textup{(l)} &
$\psi(\ptst{f.\mbr{\FFunc}{\FFunc}}) =
 \ptst{f.\mbr{\CFunc}{\CFunc}} \conc f.\mbr{\CFunc}{\CFunc} \conc
 \fjmp{2}$, 
\\
\textup{(m)} &
$\psi(f.\mbr{\TFunc}{\TFunc}) \phantom{+} =
 \ntst{f.\mbr{\CFunc}{\CFunc}} \conc f.\mbr{\CFunc}{\CFunc}$, 
\\
\textup{(n)} &
$\psi(\ntst{f.\mbr{\TFunc}{\TFunc}}) =
 \ntst{f.\mbr{\CFunc}{\CFunc}} \conc f.\mbr{\CFunc}{\CFunc} \conc
 \fjmp{2}$,
\end{tabular}
\end{flushleft}
\begin{flushleft}
\lststretch
\begin{tabular}[t]{@{\hsp{5}}l@{\hsp{.7}}l@{}}
\textup{(o)} &
$\psi(f.\mbr{\IFunc}{\IFunc}) \phantom{+} =
 f.\mbr{\CFunc}{\CFunc} \conc f.\mbr{\CFunc}{\CFunc}$, 
\\  
\textup{(p)} &
$\psi(\ptst{f.\mbr{\IFunc}{\IFunc}}) =
 f.\mbr{\CFunc}{\CFunc} \conc \ptst{f.\mbr{\CFunc}{\CFunc}}$, 
\\  
\textup{(q)} &
$\psi(\ntst{f.\mbr{\IFunc}{\IFunc}}) =
 f.\mbr{\CFunc}{\CFunc} \conc \ntst{f.\mbr{\CFunc}{\CFunc}}$. 
\end{tabular}
\end{flushleft}
Then $\psi$ witnesses the $3$-size-bounded functional completeness of 
$\PIbr(M)$.
To obtain the effects of $\ptst{f.\mbr{\FFunc}{\FFunc}}$, an instruction 
sequence from $\ISbr(M)$ is needed.
Because there does not exist such an instruction sequence of length $2$ 
with the right effects, $\PIbr(M)$ is not $2$-size-bounded functionally 
complete.
Hence, $\PIbr(M)$ is strictly $3$-size-bounded functionally complete.
\item[(5)]
Let 
$M = \set{\mbr{\IFunc}{\FFunc},\mbr{\IFunc}{\TFunc}}$.
Then instruction sequences from $\ISbr(M)$ are needed for 
$\ntst{f.\mbr{\TFunc}{\IFunc}}$, $f.\mbr{\FFunc}{\FFunc}$, 
$\ptst{f.\mbr{\FFunc}{\FFunc}}$, $f.\mbr{\TFunc}{\TFunc}$, 
$\ntst{f.\mbr{\TFunc}{\TFunc}}$, $f.\mbr{\IFunc}{\IFunc}$, 
$\ptst{f.\mbr{\IFunc}{\IFunc}}$, $\ntst{f.\mbr{\IFunc}{\IFunc}}$, 
$f.\mbr{\CFunc}{\CFunc}$, $\ptst{f.\mbr{\CFunc}{\CFunc}}$, 
$\ntst{f.\mbr{\CFunc}{\CFunc}}$, and $\ntst{f.\mbr{\TFunc}{\CFunc}}$.
Take $\psi$ such that (a) and
\begin{flushleft}
\lststretch
\begin{tabular}[t]{@{\hsp{5}}l@{\hsp{.7}}l@{}}
\textup{(r)} &
$\psi(f.\mbr{\FFunc}{\FFunc}) \phantom{+} =
 f.\mbr{\IFunc}{\FFunc}$, 
\\  
\textup{(s)} &
$\psi(\ptst{f.\mbr{\FFunc}{\FFunc}}) =
 f.\mbr{\IFunc}{\FFunc} \conc \fjmp{2}$, 
\\  
\textup{(t)} &
$\psi(f.\mbr{\TFunc}{\TFunc}) \phantom{+} =
 f.\mbr{\IFunc}{\TFunc}$, 
\\
\textup{(u)} &
$\psi(\ntst{f.\mbr{\TFunc}{\TFunc}}) =
 f.\mbr{\IFunc}{\TFunc} \conc \fjmp{2}$, 
\\
\textup{(v1)} &
$\psi(f.\mbr{\IFunc}{\IFunc}) \phantom{+} =
 \ptst{f.\mbr{\IFunc}{\FFunc}} \conc \ptst{f.\mbr{\IFunc}{\TFunc}} \conc
 \ntst{f.\mbr{\IFunc}{\FFunc}}$ or
\\  
\textup{(v2)} &
$\psi(f.\mbr{\IFunc}{\IFunc}) \phantom{+} =
 \ptst{f.\mbr{\IFunc}{\TFunc}} \conc \ntst{f.\mbr{\IFunc}{\TFunc}} \conc
 \ptst{f.\mbr{\IFunc}{\FFunc}}$ or
\\  
\textup{(v3)} &
$\psi(f.\mbr{\IFunc}{\IFunc}) \phantom{+} =
 \ptst{f.\mbr{\IFunc}{\FFunc}} \conc \ptst{f.\mbr{\IFunc}{\TFunc}} \conc
 \fjmp{1}$ or
\\  
\textup{(v4)} &
$\psi(f.\mbr{\IFunc}{\IFunc}) \phantom{+} =
 \ptst{f.\mbr{\IFunc}{\TFunc}} \conc \fjmp{2} \conc
 \ptst{f.\mbr{\IFunc}{\FFunc}}$, 
\\
\textup{(w1)} &
$\psi(\ptst{f.\mbr{\IFunc}{\IFunc}}) =
 \ptst{f.\mbr{\IFunc}{\FFunc}} \conc \ptst{f.\mbr{\IFunc}{\TFunc}} \conc
 \ptst{f.\mbr{\IFunc}{\FFunc}}$ or 
\\  
\textup{(w2)} &
$\psi(\ptst{f.\mbr{\IFunc}{\IFunc}}) =
 \ptst{f.\mbr{\IFunc}{\TFunc}} \conc \ntst{f.\mbr{\IFunc}{\TFunc}} \conc
 \ntst{f.\mbr{\IFunc}{\FFunc}}$ or 
\\  
\textup{(w3)} &
$\psi(\ptst{f.\mbr{\IFunc}{\IFunc}}) =
 \ptst{f.\mbr{\IFunc}{\FFunc}} \conc \ptst{f.\mbr{\IFunc}{\TFunc}} \conc
 \fjmp{2}$ or 
\\  
\textup{(w4)} &
$\psi(\ptst{f.\mbr{\IFunc}{\IFunc}}) =
 \ptst{f.\mbr{\IFunc}{\TFunc}} \conc \fjmp{2} \conc
 \ntst{f.\mbr{\IFunc}{\FFunc}}$, 
\\  
\textup{(x1)} &
$\psi(\ntst{f.\mbr{\IFunc}{\IFunc}}) =
 \ntst{f.\mbr{\IFunc}{\FFunc}} \conc \ptst{f.\mbr{\IFunc}{\FFunc}} \conc
 \ptst{f.\mbr{\IFunc}{\TFunc}}$ or
\\
\textup{(x2)} &
$\psi(\ntst{f.\mbr{\IFunc}{\IFunc}}) =
 \ntst{f.\mbr{\IFunc}{\TFunc}} \conc \ntst{f.\mbr{\IFunc}{\FFunc}} \conc
 \ntst{f.\mbr{\IFunc}{\TFunc}}$ or
\\
\textup{(x3)} &
$\psi(\ntst{f.\mbr{\IFunc}{\IFunc}}) =
 \ntst{f.\mbr{\IFunc}{\FFunc}} \conc \fjmp{2} \conc
 \ptst{f.\mbr{\IFunc}{\TFunc}}$ or
\\
\textup{(x4)} &
$\psi(\ntst{f.\mbr{\IFunc}{\IFunc}}) =
 \ntst{f.\mbr{\IFunc}{\TFunc}} \conc \ntst{f.\mbr{\IFunc}{\FFunc}} \conc
 \fjmp{2}$,
\\
\textup{(y1)} &
$\psi(f.\mbr{\CFunc}{\CFunc}) \phantom{+} =
 \ptst{f.\mbr{\IFunc}{\FFunc}} \conc \ptst{f.\mbr{\IFunc}{\FFunc}} \conc
 \ntst{f.\mbr{\IFunc}{\TFunc}}$ or
\\  
\textup{(y2)} &
$\psi(f.\mbr{\CFunc}{\CFunc}) \phantom{+} =
 \ptst{f.\mbr{\IFunc}{\TFunc}} \conc \ntst{f.\mbr{\IFunc}{\FFunc}} \conc
 \ptst{f.\mbr{\IFunc}{\TFunc}}$ or 
\\
\textup{(y3)} &
$\psi(f.\mbr{\CFunc}{\CFunc}) \phantom{+} =
 \ptst{f.\mbr{\IFunc}{\FFunc}} \conc \fjmp{2} \conc
 \ntst{f.\mbr{\IFunc}{\TFunc}}$ or
\\  
\textup{(y4)} &
$\psi(f.\mbr{\CFunc}{\CFunc}) \phantom{+} =
 \ptst{f.\mbr{\IFunc}{\TFunc}} \conc \ntst{f.\mbr{\IFunc}{\FFunc}} \conc
 \fjmp{1}$, 
\\  
\textup{(z1)} &
$\psi(\ptst{f.\mbr{\CFunc}{\CFunc}}) =
 \ntst{f.\mbr{\IFunc}{\FFunc}} \conc \ptst{f.\mbr{\IFunc}{\TFunc}} \conc
 \ptst{f.\mbr{\IFunc}{\FFunc}}$ or 
\\  
\textup{(z2)} &
$\psi(\ptst{f.\mbr{\CFunc}{\CFunc}}) =
 \ntst{f.\mbr{\IFunc}{\TFunc}} \conc \ntst{f.\mbr{\IFunc}{\TFunc}} \conc
 \ntst{f.\mbr{\IFunc}{\FFunc}}$ or 
\\  
\textup{(z3)} &
$\psi(\ptst{f.\mbr{\CFunc}{\CFunc}}) =
 \ntst{f.\mbr{\IFunc}{\FFunc}} \conc \ptst{f.\mbr{\IFunc}{\TFunc}} \conc
 \fjmp{2}$ or 
\\  
\textup{(z4)} &
$\psi(\ptst{f.\mbr{\CFunc}{\CFunc}}) =
 \ntst{f.\mbr{\IFunc}{\TFunc}} \conc \fjmp{2} \conc
 \ntst{f.\mbr{\IFunc}{\FFunc}}$, 
\end{tabular}
\end{flushleft}
\begin{flushleft}
\lststretch
\begin{tabular}[t]{@{\hsp{5}}l@{\hsp{.7}}l@{}}
\textup{(aa1)} &
$\psi(\ntst{f.\mbr{\CFunc}{\CFunc}}) =
 \ptst{f.\mbr{\IFunc}{\FFunc}} \conc \ptst{f.\mbr{\IFunc}{\FFunc}} \conc
 \ptst{f.\mbr{\IFunc}{\TFunc}}$ or
\\
\textup{(aa2)} &
$\psi(\ntst{f.\mbr{\CFunc}{\CFunc}}) =
 \ptst{f.\mbr{\IFunc}{\TFunc}} \conc \ntst{f.\mbr{\IFunc}{\FFunc}} \conc
 \ntst{f.\mbr{\IFunc}{\TFunc}}$ or
\\
\textup{(aa3)} &
$\psi(\ntst{f.\mbr{\CFunc}{\CFunc}}) =
 \ptst{f.\mbr{\IFunc}{\FFunc}} \conc \fjmp{2} \conc
 \ptst{f.\mbr{\IFunc}{\TFunc}}$ or
\\
\textup{(aa4)} &
$\psi(\ntst{f.\mbr{\CFunc}{\CFunc}}) =
 \ptst{f.\mbr{\IFunc}{\TFunc}} \conc \ntst{f.\mbr{\IFunc}{\FFunc}} \conc
 \fjmp{2}$,
\\
\textup{(ab1)} &
$\psi(\ntst{f.\mbr{\TFunc}{\CFunc}}) =
 \ptst{f.\mbr{\IFunc}{\FFunc}} \conc \ptst{f.\mbr{\IFunc}{\FFunc}} \conc
 \ntst{f.\mbr{\IFunc}{\TFunc}} \conc \fjmp{2}$ or
\\
\textup{(ab2)} &
$\psi(\ntst{f.\mbr{\TFunc}{\CFunc}}) =
 \ptst{f.\mbr{\IFunc}{\TFunc}} \conc \ntst{f.\mbr{\IFunc}{\FFunc}} \conc
 \ptst{f.\mbr{\IFunc}{\TFunc}} \conc \fjmp{2}$ or
\\
\textup{(ab3)} &
$\psi(\ntst{f.\mbr{\TFunc}{\CFunc}}) =
 \ptst{f.\mbr{\IFunc}{\FFunc}} \conc \fjmp{2} \conc
 \ntst{f.\mbr{\IFunc}{\TFunc}} \conc \fjmp{2}$ or
\\
\textup{(ab4)} &
$\psi(\ntst{f.\mbr{\TFunc}{\CFunc}}) =
 \ptst{f.\mbr{\IFunc}{\TFunc}} \conc \ntst{f.\mbr{\IFunc}{\FFunc}} \conc
 \fjmp{1} \conc \fjmp{2}$.
\end{tabular}
\end{flushleft}
Then $\psi$ witnesses the $4$-size-bounded functional completeness of 
$\PIbr(M)$.
To obtain the effects of $\ntst{f.\mbr{\TFunc}{\CFunc}}$, an instruction 
sequence from $\ISbr(M)$ is needed.
Because there does not exist such an instruction sequence of length $3$
with the right effects, $\PIbr(M)$ is not $3$-size-bounded functionally 
complete.
Hence, $\PIbr(M)$ is strictly $4$-size-bounded functionally complete. 
\qed
\end{itemize}
\end{proof}
Theorem~\ref{theorem-bounded-complete} tells us among other things 
that the instruction sets 
$\PIbr(\set{\mbr{\CFunc}{\CFunc}})$ 
and 
$\PIbr(\set{\mbr{\FFunc}{\FFunc},\mbr{\TFunc}{\TFunc},
            \mbr{\IFunc}{\IFunc},\mbr{\CFunc}{\CFunc}})$
are both strictly $3$-size-bounded functionally complete.
However, the latter instruction set often gives rise to shorter 
instruction sequences than the former instruction set because the 
effects of a primitive instruction from the set
$\PIbr(\set{\mbr{\FFunc}{\FFunc},\mbr{\TFunc}{\TFunc},
            \mbr{\IFunc}{\IFunc}})$
do not have to be obtained by means of two or three primitive 
instructions from the set 
$\PIbr(\set{\mbr{\CFunc}{\CFunc}})$.

We have the following corollary of the proof of 
Theorem~\ref{theorem-bounded-complete}.
\begin{corollary}
\label{corollary-bounded-complete-2}
Let $M \subset \set{\mbr{\FFunc}{\FFunc},\mbr{\TFunc}{\TFunc},
                    \mbr{\IFunc}{\IFunc},\mbr{\CFunc}{\CFunc},
                    \mbr{\IFunc}{\FFunc},\mbr{\IFunc}{\TFunc}}$.
Then:
\begin{itemize}
\item[\textup{(1)}]
$\PIbr(M)$ is strictly $4$-size-bounded functionally complete
if $\mbr{\CFunc}{\CFunc} \notin M$ and 
either
$\set{\mbr{\FFunc}{\FFunc},\mbr{\TFunc}{\TFunc},\mbr{\IFunc}{\IFunc}}
  \subseteq M$
or
$\set{\mbr{\IFunc}{\FFunc},\mbr{\IFunc}{\TFunc}}  \subseteq M$;
\item[\textup{(2)}]
$\PIbr(M)$ is strictly $3$-size-bounded functionally complete
if $\mbr{\CFunc}{\CFunc} \in M$.
\end{itemize}
\end{corollary}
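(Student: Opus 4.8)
The plan is to obtain both parts by pairing a simple monotonicity observation, used for the upper bounds, with the non-realizability arguments already present in the proof of Theorem~\ref{theorem-bounded-complete}, used for the strictness. First I would record the monotonicity fact: if $M \subseteq M'$, then any $\psi$ witnessing $k$-size-bounded functional completeness of $\PIbr(M)$ is also a witness for $\PIbr(M')$, because $\ISbr(M) \subseteq \ISbr(M')$. Hence $k$-size-bounded functional completeness propagates upward along $\subseteq$, and its negation propagates downward. This lets me reduce each claim to the extreme members of the family of admissible $M$.

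For the upper bounds I would invoke the witnesses from the theorem. In part~(2), $\mbr{\CFunc}{\CFunc} \in M$ gives $\set{\mbr{\CFunc}{\CFunc}} \subseteq M$, so the witness built for part~(4) of the theorem lies in $\ISbr(M)$ and yields $3$-size-bounded functional completeness. In part~(1), $\mbr{\CFunc}{\CFunc} \notin M$ together with the hypothesis gives $\set{\mbr{\FFunc}{\FFunc},\mbr{\TFunc}{\TFunc},\mbr{\IFunc}{\IFunc}} \subseteq M$ or $\set{\mbr{\IFunc}{\FFunc},\mbr{\IFunc}{\TFunc}} \subseteq M$, so the witness from part~(3), respectively part~(5), lies in $\ISbr(M)$ and yields $4$-size-bounded functional completeness. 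By monotonicity these bounds hold for every admissible $M$, not merely the generating subsets.

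For the strictness I would, by the downward direction of monotonicity, reduce to the largest admissible $M$ and there pin down one instruction that provably exceeds the lower level. For part~(1) the largest admissible set is $\set{\mbr{\FFunc}{\FFunc},\mbr{\TFunc}{\TFunc},\mbr{\IFunc}{\IFunc},\mbr{\IFunc}{\FFunc},\mbr{\IFunc}{\TFunc}}$ and the target is $\ntst{f.\mbr{\TFunc}{\CFunc}}$, as in parts~(3) and~(5). Since no update function occurring in this $M$ is $\CFunc$, realizing the complement forces a test on the current content with two distinct assignments on its branches, while the $\ntst$-control flow forces an additional one-position skip; a short case analysis on the first instruction (a jump collapses to a length-$2$ problem; a test on $\mbr{\IFunc}{\IFunc}$, $\mbr{\IFunc}{\FFunc}$ or $\mbr{\IFunc}{\TFunc}$ forces a common continuation that would have to assign both $0$ and $1$) rules out length~$3$. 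For part~(2) the largest admissible sets are the five $5$-element subsets containing $\mbr{\CFunc}{\CFunc}$, and for each I would name the instruction whose equivalence class loses its length-$1$ representative, namely $\ptst{f.\mbr{\IFunc}{\FFunc}}$ or its mirror $\ptst{f.\mbr{\IFunc}{\TFunc}}$ when $\mbr{\IFunc}{\FFunc}$ respectively $\mbr{\IFunc}{\TFunc}$ is absent, and the corresponding get/set instruction otherwise, and show it has no length-$2$ realization.

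I expect the main obstacle to be exactly these non-realizability claims for the maximal subsets. The upper bounds and the downward propagation of strictness are routine once monotonicity is stated, but the enlarged instruction sets admit many more length-$k$ candidates than the generating subsets do; in particular a pair of test instructions can, through their skip behaviour, imitate a forward jump, so the case analysis must exclude test-plus-test shapes and not only test-plus-assignment shapes. Carrying out the exhaustive check for each maximal $M$, simultaneously matching the $\sfause$ behaviour, the $\sfapply$ behaviour, and the termination/inaction pattern for all $b \in \Bool$ and all $n \in \Nat$, is the delicate heart of the argument and the step I would spend the most care on.
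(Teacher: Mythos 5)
Your architecture is sound and in places more explicit than the paper's own derivation. The paper obtains the corollary from the proof of Theorem~\ref{theorem-bounded-complete} by exploiting the multiple alternative witnesses (c1)--(c4), (v1)--(v4), etc.\ listed there: for each admissible $M$ one selects the alternatives whose methods lie in $M$. Your upward monotonicity observation ($\ISbr(M) \subseteq \ISbr(M')$ for $M \subseteq M'$, so any witness for $\PIbr(M)$ is a witness for $\PIbr(M')$) reaches the same upper bounds more cleanly from parts (3), (4) and (5) of the theorem, and is correct. You have also correctly located the real burden: non-completeness only propagates downward, so for part (1) you must prove non-$3$-size-bounded completeness for the maximal set $\set{\mbr{\FFunc}{\FFunc},\mbr{\TFunc}{\TFunc},\mbr{\IFunc}{\IFunc},\mbr{\IFunc}{\FFunc},\mbr{\IFunc}{\TFunc}}$, which is strictly larger than either set treated in parts (3) and (5), and for part (2) for the five maximal $5$-element sets containing $\mbr{\CFunc}{\CFunc}$, none of which is treated in the theorem. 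These non-realizability claims are the entire content of the strictness assertions, and your proposal only sketches them.

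The gap is not merely one of detail, because the witness you name for part (2) fails. Take the maximal $M$ with $\mbr{\IFunc}{\FFunc} \notin M$, so that $\mbr{\FFunc}{\FFunc},\mbr{\CFunc}{\CFunc} \in M$, and consider your candidate $\ptst{f.\mbr{\IFunc}{\FFunc}}$. The length-$2$ sequence $\ptst{f.\mbr{\CFunc}{\CFunc}} \conc \ptst{f.\mbr{\FFunc}{\FFunc}}$ realizes it: for content $1$ the first instruction replies \emph{false}, sets the content to $0$ and skips, landing one position past the block; for content $0$ it replies \emph{true}, sets the content to $1$ and proceeds, whereupon the second instruction replies \emph{false}, sets the content to $0$ and skips, landing two positions past the block --- exactly the control flow and final contents of $\ptst{f.\mbr{\IFunc}{\FFunc}} \conc \halt^n$ for every $n$ and both the $\sfause$ and $\sfapply$ observations. (This is precisely the test-plus-test imitation of a jump that you warn about in your last paragraph, so the warning is apt but the conclusion you draw from the sketch is not.) Consequently the case analysis for the enlarged sets cannot be deferred: you must exhibit, for each maximal $M$ in part (2), a specific instruction together with a complete exclusion of all length-$2$ candidates over that $M$, and the example shows that the ``obvious'' instruction inherited from part (2) of the theorem is not it. Until that is done, the strictness in part (2) of the corollary is not established by your argument.
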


\section{Concluding Remarks}
\label{sect-concl}

We have investigated instruction sequence size bounded functional 
completeness of instruction sets for Boolean registers. 
Our main results are Corollaries~\ref{corollary-bounded-complete-1}
and~\ref{corollary-bounded-complete-2}.
The latter corollary covers 44 instruction sets.
The covered instruction sets include the instruction sets that we used 
earlier in~\cite{BM13a,BM14e} and many other relatively obvious 
instruction sets.
The covered instruction sets belong to the 255 instruction sets that are 
non-empty subsets of one of the 256 instruction sets with the 
property that each possible instruction has the same effects as one 
from the set (see Corollary~\ref{corollary-min-meth-set}).
It is still an open question what is the smallest $k$ such that each of 
these 255 instruction sets is $k$-size-bounded functionally complete if 
it is $k'$-size-bounded functionally complete for some $k'$.

In our work on instruction sequence size complexity presented 
in~\cite{BM13a}, we have established several connections between 
instruction sequence based complexity theory and classical complexity 
theory.
For example, we have introduced instruction sequence based counterparts 
of the complexity classes P/poly and NP/poly and we have formulated an 
instruction sequence based counterpart of the well-known 
complexity-theoretic conjecture that \mbox{NP $\not\subseteq$ P/poly}.%
\footnote
{The non-uniform complexity classes P/poly and NP/poly, as well as the
 conjecture that NP $\not\subseteq$ P/poly, are treated in many
 textbooks on classical complexity theory 
 (see e.g.~\cite{AB09a,HS11a,Weg05a}).}
However, for many a question that arises naturally with the approach to 
complexity based on instruction sequence size, it is far from obvious 
whether a comparable question can be raised in classical complexity 
theory based on Turing machines or Boolean circuits.
In particular, this is far from obvious for questions concerning 
instruction sets for Boolean registers.
 
\subsection*{Acknowledgement}

We thank three anonymous referees for their helpful suggestions.

\bibliographystyle{splncs03}
\bibliography{IS}

\end{document}